

\ifx\STACS\undefined%
\documentclass[12pt]{article}%
\else%
\documentclass[runningheads,a4paper]{llncs}
\bibliographystyle{splncs}
 \fi

\ifx\STACS\undefined
\newcommand{\InConfVer}[1]{}
\newcommand{\InFullVer}[1]{#1}
\else
\newcommand{\InConfVer}[1]{#1}
\newcommand{\InFullVer}[1]{}
\fi

   \usepackage[breaklinks]{hyperref} 
\usepackage{graphicx}
\usepackage[active]{srcltx} 
\InFullVer{%
   \usepackage{fullpage}%
}%
\InFullVer{%
   \usepackage{euscript}%
}%

\usepackage{xspace}%
\usepackage{caption}%
\usepackage{color}%
\usepackage{amsmath}%
\usepackage{paralist}%
\usepackage{picins}%
\usepackage{boxedminipage}
\usepackage{enumerate}
\usepackage{picins}
\usepackage{multirow}
\InConfVer{%
   \usepackage{microtype}
}
\usepackage[noend]{algorithmic}
\InFullVer{%
   \usepackage{ntheorem}%
   \theoremseparator{.}%
}
\InConfVer{%
   
}

\usepackage{proofapnd}

\InFullVer{%
\setlength{\textwidth}{6.5in}
\setlength{\topmargin}{0.0in}
\setlength{\headheight}{0in}
\setlength{\headsep}{0.0in}
\setlength{\textheight}{9in}
\setlength{\oddsidemargin}{0in}
\setlength{\evensidemargin}{0in}
}
\newcommand{\WSPD}{\Term{WSPD}\index{well-separated pair
      decomposition}%
   \xspace}

\newcommand{\PRPOTSAT}{\Term{PRPOT3SAT}\xspace}
\newcommand{\ThreeSAT}{\Term{3SAT}\xspace}
\newcommand{\True}{\Term{TRUE}\xspace}
\newcommand{\False}{\Term{FALSE}\xspace}

\newcommand{\LowerBoundedCenter}   {\PStyle{{Lower{}Bounded{}Center}}\xspace}
\newcommand{\kCenter} {\PStyle{k-Center}\xspace}
\newcommand{\lbc}{\PStyle{LBC}\xspace}


%

\newcommand{\PTAS}{\Term{P{T}AS}\xspace}

\renewcommand{\th}{\si{th}\xspace}

%

\providecommand{\si}[1]{#1}

\newcommand{\AlinaThanksInner}[1]{%
   {Department of Computer Science; %
      University of Illinois; %
      201 N. Goodwin Avenue; %
      Urbana, IL, 61801, USA; %
      {\tt \si{ene}1\atgen{}\si{uiuc}.\si{edu}}; {\tt
         \si{\url{http://www.cs.uiuc.edu/\string~ene1/}}.} #1}}
\newcommand{\AlinaThanks}[1]{%
   \thanks{\AlinaThanksInner{#1}}}

\newcommand{\BenThanksInner}[1]{{Department of Computer Science;
      University of Illinois; %
      201 N. Goodwin Avenue; %
      Urbana, IL, 61801, USA; %
      {\tt \si{raichel}2\atgen{}\si{uiuc}.\si{edu}}; %
      {\tt \url{\si{http://www.cs.uiuc.edu/\string~\si{raichel2}}}}%
      . %
      #1}}

\newcommand{\BenThanks}[1]{\thanks{\BenThanksInner{#1}}}

\newcommand{\atgen}{\symbol{'100}}
\newcommand{\SarielThanksInner}[1]{{Department of Computer Science;
      University of Illinois; 201 N. Goodwin Avenue; Urbana, IL,
      61801, USA; {\tt \si{sariel}\atgen{}\si{uiuc.edu}}; %
      {\tt \url{http://www.uiuc.edu/\string~sariel/}.} #1}}
\newcommand{\SarielThanks}[1]{\thanks{\SarielThanksInner{#1}}}

\definecolor{blue25}{rgb}{0,0,0.55}%
\newcommand{\emphic}[2]{%
   \textcolor{blue25}{%
      \textbf{\emph{#1}}}%
   \index{#2}}

\newcommand{\emphi}[1]{\emphic{#1}{#1}}
\newcommand{\pth}[2][\!]{#1\left({#2}\right)}

\definecolor{red25}{rgb}{0.4,0,0.0}%

\newcommand{\PStyle}[1]{\textcolor{red25}{\textrm{\textsf{#1}}}}

\InFullVer{%
   \newtheorem{theorem}{Theorem}[section]
   \newtheorem{lemma}[theorem]{Lemma}%
   \newtheorem{definition}[theorem]{Definition}
   
   \newtheorem{corollary}[theorem]{Corollary}

   {\theorembodyfont{\rm} }
   {\theorembodyfont{\rm} }
}
\InFullVer{
   \newtheorem{problem}[theorem]{Problem}%
}

\newcommand{\distChar}{\mathsf{d}}%
\newcommand{\distX}[2]{\distChar\pth{#1, #2}}

\algsetup{indent=2em}

\newcommand{\distE}[1]{\left\| {#1}  \right\|}

\newcommand{\Term}[1]{\textsf{#1}}%

\newcommand{\APXHard}{\Term{AP{X}-hard}\xspace}%
\newcommand{\NPHard}{\Term{NP-Hard}\xspace}%
\renewcommand{\P}{\Term{P}\xspace}%
\newcommand{\NP}{\Term{NP}\xspace}%

\renewcommand{\Re}{{\rm I\!\hspace{-0.025em} R}}

%


\newcommand{\MakeSBig}{\rule[0.0cm]{0.0cm}{0.38cm}} 

\newcommand{\seclab}[1]{\label{sec:#1}}
\newcommand{\secref}[1]{Section~\ref{sec:#1}}

\newcommand{\thmlab}[1]{{\label{theo:#1}}}%
\newcommand{\thmref}[1]{Theorem~\ref{theo:#1}}%

\providecommand{\deflab}[1]{\label{def:#1}}
\newcommand{\defref}[1]{Definition~\ref{def:#1}}

\newcommand{\corlab}[1]{\label{cor:#1}}
\newcommand{\corref}[1]{Corollary~\ref{cor:#1}}

\newcommand{\apndlab}[1]{\label{apnd:#1}}
\newcommand{\apndref}[1]{Appendix~\ref{apnd:#1}}

\newcommand{\lemlab}[1]{\label{lemma:#1}}
\newcommand{\lemref}[1]{Lemma~\ref{lemma:#1}}

\newcommand{\figlab}[1]{\label{figure:#1}}
\newcommand{\figref}[1]{Figure~\ref{figure:#1}}

\InFullVer{%
   \newenvironment{proof}{\trivlist\item[]\emph{Proof}:}%
   {\unskip\nobreak\hskip 1em plus 1fil\nobreak%
      \myqedsymbol
      \parfillskip=0pt%
      \endtrivlist}

   {\unskip\nobreak\hskip 1em plus 1fil\nobreak%
      \myqedsymbol
      \parfillskip=0pt%
      \endtrivlist}%
}

\newcommand{\myqedsymbol}{\rule{2mm}{2mm}}
\newcommand{\cardin}[1]{\left| {#1} \right|}

\newcommand{\permut}[1]{\left\langle {#1} \right\rangle}%

\newcommand{\eps}{\varepsilon}%

\newcommand{\remove}[1]{}

\newcommand{\floor}[1]{\left\lfloor {#1} \right\rfloor}

\newcommand{\etal}{\textit{et~al.}\xspace}


\usepackage{pifont}%

\usepackage[symbol*]{footmisc}

\DefineFNsymbols*{stars}[text]{%
   {\ding{172}} %
   {\ding{173}} %
   {\ding{174}} %
   {\ding{175}} %
   {\ding{176}} %
   {\ding{177}} %
   {\ding{178}} %
   {\ding{179}} %
   {\ding{180}} %
   }
\setfnsymbol{stars}


%
%
%
%
%

%
%
%

\newcommand{\pnt}{{\mathsf{p}}}%
\newcommand{\pntA}{{\mathsf{q}}}%

\newcommand{\PntSet}{\mathsf{P}}%

\renewcommand{\th}{th\xspace}


\newcommand{\centers}{C}%
\newcommand{\lb}{\lambda}

\newcommand{\dset}{\mathcal{D}}
\newcommand{\cset}{\mathcal{C}}
\newcommand{\gset}{\mathcal{G}}
\newcommand{\ropt}{r_{opt}}
\newcommand{\roptX}[2]{\ropt\pth{#1, #2}}

\newcommand{\net}{\Algorithm{Net}\xspace}

\DefineNamedColor{named}{RedViolet} {cmyk}{0.07,0.90,0,0.34}
\newcommand{\AlgorithmI}[1]{{\textcolor[named]{RedViolet}{\texttt{\bf{#1}}}}}
\newcommand{\Algorithm}[1]{{\AlgorithmI{#1}\index{algorithm!#1@{\AlgorithmI{#1}}}}}

\newcommand{\polylog}{\mathrm{polylog}}

\newcommand{\netX}[2]{\net\pth{#2, #1}}

\newenvironment{proofext}{\noindent{\em Proof of
   }}{\hfill{\hfill\myqedsymbol}}

\newcommand{\ProofFake}[1]{ 
   {\emph{Proof:}} %
   #1
   \hfill{\myqedsymbol}
}
\newcommand{\Grid}{\mathsf{G}}

%
%
%
%


\begin{document}

\title{Fast Clustering with Lower Bounds:\\ 
No Customer too Far, No Shop too Small}

\InConfVer{%
   \titlerunning{Fast Clustering with Lower Bounds}%
   \authorrunning{A. Ene, S. Har-Peled, and B. Raichel}
   \institute{}
}%

   \author{%
      Alina Ene%
      \AlinaThanks{Work on this paper was partially supported by NSF
         grants CCF-0728782 and CCF-1016684.}%
      \and%
      Sariel Har-Peled%
      \SarielThanks{Work on this paper was partially supported by a
         NSF AF awards CCF-0915984 and CCF-1217462.}%
      \and%
      Benjamin Raichel%
      \BenThanks{Work on this paper was partially supported by a NSF
         AF award CCF-0915984.}%
   }%


\maketitle

\begin{abstract}
	We study the \LowerBoundedCenter (\lbc) problem, which is a
	clustering problem that can be viewed as a variant of the
	\kCenter problem. In the \lbc problem, we are given a set of
	points $P$ in a metric space and a lower bound $\lambda$, and the
	goal is to select a set $C \subseteq P$ of centers and an
	assignment that maps each point in $P$ to a center of $C$ such
	that each center of $C$ is assigned at least $\lambda$ points.
	The price of an assignment is the maximum distance between a
	point and the center it is assigned to, and the goal is to find a
	set of centers and an assignment of minimum price.
	We give a constant factor approximation algorithm for the \lbc
	problem that runs in $O( n \log n)$ time when the input
	points lie in the $d$-dimensional Euclidean space $\Re^d$, where
	$d$ is a constant.  We also prove that this problem cannot be
	approximated within a factor of $1.8 - \epsilon$ unless $\P = \NP$
	even if the input points are points in the Euclidean plane
	$\Re^2$.
  \InConfVer{
	\keywords{Clustering, nets, hardness of approximation}%
  }
\end{abstract}


\section{Introduction}

Clustering is a practical and well studied problem in computer
science.  Work on clustering varies greatly based on one's choice of
how to measure the quality of the clustering.  Clustering is a variant
of unsupervised learning and has been widely studied \cite{dhs-pc-01}. In
clustering, the input is a set of points $\PntSet$ in a metric space
and we are interested in partitioning it into ``nice'' clusters. What
a nice cluster is depends on the application at hand, and the resources
available.

Clustering measures that have algorithms with guaranteed performance
include:
\begin{compactitem}
    \item \textbf{$k$-median clustering}: Here one has to choose $k$
    center points, and the price of the clustering is the sum of
    distances of the points to their respective closest center. There
    is considerable work on this problem both in the general and
    geometric settings \cite{arr-asekm-98,cgts-caamp-99,agkmp-lshkm-01,c-kmchd-06,c-cfaak-08}.

    \item \textbf{$k$-means clustering}: Here, the price is the sum of
    squared distances of a point to its nearest center. This
    clustering is very popular in practice as it has a very simple
    heuristic that works reasonably well. There is also some work on
    understanding the performance of this heuristic 
    \cite{l-lsqp-82,hs-hfkmm-05,amr-sakmm-11}.  
    It can also be approximated using local search
    \cite{kmnpsw-lsaak-04}. In lower dimensional Euclidean settings it
    can be approximated using coresets \cite{hk-sckmk-07}. In higher
    dimensions it can be approximated in linear time if the number of
    clusters and quality of approximation, $\eps$, is fixed \cite{kss-ltasc-10}
    (the constant in the running time depends exponentially on $k$ and
    $\eps$).
   
    \item \textbf{$k$-center clustering}: Here, the price is the
    maximum distance a point has to travel to its closest center.
    Gonzalez \cite{g-cmmid-85} showed a $2$ approximation algorithm
    (for general metric spaces) with running time $O(nk)$. This was
    later improved by Feder and Greene \cite{fg-oafac-88}, who showed
    an $O\pth{n \log k}$ time algorithm, which is optimal in the
    comparison model.  A linear time algorithm is known if is allowed
    to use randomization and the floor function \cite{h-cm-04}.  Feder
    and Greene also showed that it can not be approximated within
    $1.8$, in the plane, unless $\P = \NP$.
\end{compactitem}

\paragraph*{Lower bounded center clustering.}
Here we are interested in a variant of the \kCenter clustering
problem where one is allowed to open as many centers as one likes, as
long as each center gets assigned enough clients.  Note that in this
variant, called \LowerBoundedCenter (\lbc), one has to choose not
only where to place the centers, but also how to assign the clients
to the open centers.

The \lbc problem is quite natural in the following sense.  Suppose
you are trying to decide where and how many stores, wireless towers,
post offices, etc., to open.  In order to open a store in a given
area there needs to be a sufficiently large client base.  In order to
keep the clients happy, you wish to minimize the maximum
client-center distance.  There is now no longer a limit, $k$, on the
number of stores you can open.  Instead you open as many stores as
you can to make your clients happy, subject to the constraint that
each store you open is profitable.

Aggarwal \etal \cite{apftkkz-aac-10} consider \LowerBoundedCenter in
general metric spaces, and provide a matching upper and lower
approximation bound of 2.  The algorithm they provide for the upper
bound as described takes at least $\Omega(n^3)$ time, and involves
several maximum flow computations.

\paragraph*{Nets and clustering.} %
An \emphi{$r$-net}\footnote{One should not confuse this concept with
   the $\eps$-net concept used in the context of VC-dimension theory.}
is a subset $C \subseteq \PntSet$, such that:
\begin{inparaenum}[(i)]
    \item the distance between any pair of points of $C$ is larger than
    $r$, and
    \item the distance of any point of $\PntSet$ to its nearest
    neighbor in $C$ is at most $r$.
\end{inparaenum}
An $r$-net is a good representation of the point set if we care about
the resolution $r$.

Interestingly, but somewhat tangentially, one can compute a greedy
permutation of the points of $\PntSet = \permut{\pnt_1, \ldots,
   \pnt_n}$, and radii $r_1 \geq r_2 \geq \cdots \geq r_n$, such that
$\PntSet_i = \permut{\pnt_1, \ldots, \pnt_i}$ is (approximately) an
$r_i$-net of $\PntSet$, for all $i$. As such, the greedy permutation
is a good way to encode nets of a point-set in all resolutions.  This
permutation can be computed in $O(n \log n)$ time for finite metric
spaces having low doubling dimension \cite{hm-fcnld-06} (as such, this
algorithm also works in low dimensional Euclidean space). See
\cite{h-gaa-11} for more information.

\subsection{Our Results}

We start by observing, in \secref{compNets}, that $r$-nets can be
computed in linear time in low dimensional Euclidean space.  This
observation is implicit in previous work \cite{h-cm-04}, but it is
worth bringing it to the forefront, as it serves as our basic
building block (in particular, one can not just use the algorithm of
Har-Peled \cite{h-cm-04} as it has restrictions on the number of
clusters it can handle). 
In \secref{lbc} it is shown that using this and well-separated pairs
decompositions (\WSPD{}s) leads to an $O(n \log n)$ time
$(4+\eps)$-approximation algorithm, see \lemref{wspd:lbc}.  
This near linear running time is a significant speed up from the work of 
Aggarwal \etal \cite{apftkkz-aac-10}, though at the cost of 
loosing another factor of 2 in the approximation.

Underscoring the importance of the linear time net computation 
presented here, this net computation has since been used 
to get a randomized expected linear time algorithm which 
achieves constant factor approximations to a more general class of 
problems \cite{hr-nplta-13}.  

As such, one of the main contributions of the current paper is in 
providing a hardness reduction proving that no \PTAS exists for \lbc unless 
$\P = \NP$, even in the plane (inspired to some extent by
the hardness proof in \cite{fg-oafac-88}). Note that the lower bound
of 2 provided in \cite{apftkkz-aac-10} for general metric spaces does
not apply to the specific case of $\Re^2$.

\paragraph*{Paper organization.} %
We start in \secref{compNets} by showing how to compute nets
efficiently.  In \secref{lbc} we present the clustering algorithm
using net computation.  The hardness of approximation is proved in
\secref{hardness}.



\section{Computing Nets Quickly for a Point Set %
   in $\Re^d$}
\seclab{compNets}

Here we show how to compute nets quickly. Note that our basic
approach is implicit in previous work \cite{h-cm-04}. We emphasize
that we cannot use the clustering algorithm of Har-Peled
\cite{h-cm-04} directly for our purposes, since the algorithm does
not quite compute what we need and the algorithm's running time is
linear only if the number of clusters is small; the algorithm that we
give in this paper runs in linear time for any number of clusters.

\begin{definition}%
    \deflab{net}%
    For a point set $\PntSet$ in a metric space with a metric
    $\distChar$, and a parameter $r>0$, an $r$-\emphi{net} of $\PntSet$ is a
    subset $\cset \subseteq \PntSet$, such that
    \begin{inparaenum}[(i)]
        \item for every $\pnt,\pntA \in \cset$, $\pnt \neq \pntA$, we
        have that $\distX{\pnt}{\pntA} > r$, and
        \item for all $\pnt\in \PntSet$, we have that
        $\distX{\pnt}{\cset}=\min_{\pntA\in \cset} \distX{\pnt}{\pntA}
        \leq r$.
    \end{inparaenum}
\end{definition}

There is a simple algorithm for computing $r$-nets.  Namely, let all
the points in $\PntSet$ be initially unmarked.  While there remains an
unmarked point, $p$, add $p$ to the set of centers $\cset$, and mark
it and all other points within distance $\leq r$ from $p$ (i.e. we are
scooping away balls of radius $r$).  By using grids and hashing one
can modify this algorithm to run in linear time.

\begin{lemma}%
    \lemlab{net}%
    Given a point set $\PntSet\subseteq \Re^d$ of size $n$ and a
    parameter $r>0$, one can compute an $r$-net for $\PntSet$ in
    $O(n)$ time.
\end{lemma}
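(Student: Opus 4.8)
The plan is to implement the ball-scooping algorithm sketched above, using a grid whose resolution is tuned to $r$ so that every elementary step touches only a constant number of cells. First I would lay down an axis-parallel grid $\Grid$ with cells of side length $r/\sqrt{d}$; then any two points in a common cell are within distance $r$, and, more importantly, every point within distance $r$ of a point $\pnt$ lies in the cell of $\pnt$ or in one of the cells whose integer coordinates differ from those of $\pnt$'s cell by at most $\lceil\sqrt d\rceil + 1$ in each axis — a set of $2^{O(d)} = O(1)$ cells. Using the floor function to compute the cell containing each point, and hashing, I would bucket the $n$ points by their cells in $O(n)$ time (this is the one place where the model of \cite{h-cm-04} — floor function and hashing — is used to avoid an $O(n\log n)$ sort), and mark every point as unmarked.

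Next I would iterate over the nonempty cells in arbitrary order. When a cell $c$ is processed, scan its point list for an unmarked point; if there is none, skip it. Otherwise let $\pnt$ be such a point, add $\pnt$ to $\cset$, enumerate the $O(1)$ cells in the neighborhood of $c$ described above, and for each point $\pntA$ in those cells with $\distX{\pnt}{\pntA}\le r$, mark $\pntA$. For \textbf{correctness}: every point $\pntA$ is eventually marked, since when the cell containing $\pntA$ is processed, either $\pntA$ is already marked (so a center lies within distance $r$ of it), or some unmarked point $\pnt$ of that cell becomes a center and $\distX{\pnt}{\pntA}\le r$ as they share a cell of diameter $r$; this gives property (ii). For property (i): if $\pnt,\pnt'\in\cset$ with $\pnt'$ added later, then $\pnt'$ was unmarked at that moment, so $\distX{\pnt}{\pnt'}>r$, for otherwise $\pnt'$ would have been scanned (its cell is in the enumerated neighborhood of $\pnt$'s cell) and marked when $\pnt$ was added.

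For the \textbf{running time}: building $\Grid$ and the buckets is $O(n)$; the outer loop visits each nonempty cell once, and there are at most $n$ of them; and a cell $c'$ is scanned during a marking step only when a center is created in one of the $O(1)$ cells in the neighborhood of $c'$. Since the outer loop processes each cell once and creates at most one center per cell — after the first center of a cell is added, the diameter bound marks the remaining points of that cell — the cell $c'$ is scanned $2^{O(d)} = O(1)$ times, each scan costing $O\pth{\cardin{c'}}$ work. Summing over all cells yields total work $O\pth{\sum_{c'}\cardin{c'}} = O(n)$.

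The main obstacle is precisely this charging argument bounding how often a cell is rescanned: one must observe simultaneously that each cell contributes at most one center and that the neighborhood enumerated around each center has constant size, so that no cell's list is scanned more than $O(1)$ times. Everything else — the choice of grid resolution, the floor-function bucketing, and the per-point distance tests — is routine, and the constant in the $O(n)$ bound is $2^{O(d)}$, which is a true constant since $d$ is fixed.
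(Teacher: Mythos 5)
Your proposal is correct and follows essentially the same approach as the paper: bucket the points into a grid of resolution $\Theta(r/\sqrt{d})$ via the floor function and hashing, greedily scoop balls of radius $r$, and charge each scan of a cell's list to one of the $O(1)$ nearby cells that can each contribute at most one center. The only differences are cosmetic (you iterate over cells rather than points, and use cell diameter $r$ instead of the paper's $r/2$, which still forces one center per cell since marking is at distance $\le r$).
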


\vspace{-0.5cm}
\parpic[r]{%
   \begin{minipage}{0.25\linewidth}%
       %
       %
       \vspace{-.03cm}
       \includegraphics[width=.95\linewidth]{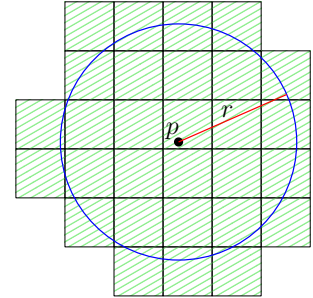}\\%
       \vspace{-0.85cm}
           \captionof{figure}{\small\newline Neighborhood of $p$.}%
           \figlab{planar}%
   \end{minipage}
}%


\newcommand{\GridNbrChar}{\mathsf{N}}
\newcommand{\GridNbr}[2]{\mathsf{N}_{\leq #2}\pth{#1}}
\newcommand{\GridNbrCell}[1]{\mathsf{N}\pth{#1}}
\newcommand{\gidX}[1]{\mathrm{id}\pth{#1}} 
\ProofFake{%
    Let $\Grid$ denote the grid in $\Re^d$ with side length $\Delta =
    r/\pth{2\sqrt{d}}$.  First compute for every point $\pnt \in
    \PntSet$ the grid cell in $\Grid$ that contains $\pnt$; that is,
    the cell containing $\pnt$ is uniquely identified by the tuple of
    integers $\gidX{\pnt} = \pth[]{\MakeSBig\! \floor{\pnt_1/\Delta},
       \ldots, \floor{\pnt_d/\Delta}}$, where $\pnt = \pth{\pnt_1,
       \ldots, \pnt_d} \in \Re^d$.  Let $\gset$ denote the set of
    non-empty grid cells of $\Grid$.  Similarly, for every non-empty
    cell $g \in \gset$ we compute the set of points of $\PntSet$ which
    it contains.  This task can be performed in linear time using
    hashing and bucketing assuming the floor function can be performed
    in constant time, as using hashing we can store a grid cell
    $\gidX{\cdot}$ in a hash table and in constant time hash each
    point into its appropriate bin.  For a point $\pnt \in \PntSet$
    let $\GridNbr{\pnt}{r}$ denote the set of grid cells in distance
    $\leq r$ from $\pnt$, which is the \emphi{neighborhood} of $\pnt$.
    Observe that $\cardin{\GridNbr{\pnt}{r}} =
    O\pth{(2r/(r/2\sqrt{d})+1)^d} = O\pth{(4\sqrt{d}+1)^d } = O(1)$.
   
    Scan the points of $\PntSet$ one at a time, and let $\pnt$ be the
    current point.  If $\pnt$ is marked then move on to the next
    point.  Otherwise, add $\pnt$ to the set of net points, $\cset$,
    and mark it and each point $\pntA \in \PntSet$ such that
    $\distX{\pnt}{\pntA} \leq r$.  Since the cells of $\GridNbr{\pnt}{r}$
    contain all such points, we only need to check the lists of points
    stored in these grid cells.  At the end of this procedure every
    point is marked.  Since a point can only be marked if it is in
    distance $\leq r$ from some net point, and a net point is only
    created if it is unmarked when visited, this implies that $\cset$
    is an $r$-net.
   
    For the running time, observe that a grid cell, $c$, has its list
    scanned only if $c$ is in the neighborhood of some created net
    point.  From the discussion above we know that there are $O(1)$
    cells which could contain a net point $\pnt$ such that $c\in
    \GridNbr{\pnt}{r}$.  Also, we create at most one net point per
    cell since the diameter of a grid cell is strictly smaller than
    $r$.  Therefore $c$ had its list scanned $O(1)$ times.  Since the
    only real work done is in scanning the cell lists and since the
    cell lists are disjoint, this implies an $O(n)$ running time
    overall.
}%

\bigskip

Observe, that the closest net point, for a point $\pnt \in \PntSet$,
must be in one of its neighborhood grid cells. Since every grid cell
can contain only a single net point, it follows that in constant time
per point of $\PntSet$, one can compute its nearest net point.  We
thus have the following.

\begin{corollary}%
    \corlab{valid}%
    In $O(n)$ time one can not only compute an $r$-net, but also
    compute for each center the set of points of $\PntSet$ for which 
    it is the nearest center.
\end{corollary}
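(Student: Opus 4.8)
The plan is to piggyback on the grid-based algorithm of \lemref{net}. First I would run that algorithm; besides producing the $r$-net $\cset$, it leaves us with the grid $\Grid$ of side length $\Delta = r/\pth{2\sqrt{d}}$, a hash table storing every non-empty cell $g \in \gset$ together with the list of points of $\PntSet$ it contains, and --- with a trivial modification --- a pointer from each cell to the net point it contains, if any. That each cell contains at most one net point is exactly the property already used in the running-time analysis of \lemref{net}: the diameter of a cell is $\Delta\sqrt{d} = r/2 < r$, so the moment a net point is created inside a cell every other point of that cell is marked, and no later net point can land there.

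Next, for each point $\pnt \in \PntSet$, I would enumerate the cells of $\GridNbr{\pnt}{r}$ --- there are only $O\pth{(4\sqrt{d}+1)^d} = O(1)$ of them, as shown in the proof of \lemref{net} --- look each up in the hash table, collect the net points found (at most one per cell), and return the one minimizing the distance to $\pnt$, breaking ties arbitrarily; call it the center assigned to $\pnt$. This is correct: since $\cset$ is an $r$-net, $\pnt$ has some net point $\pntA$ with $\distX{\pnt}{\pntA} \leq r$, and then the cell containing $\pntA$ is within distance $\leq r$ of $\pnt$, hence lies in $\GridNbr{\pnt}{r}$; so the true nearest net point is among the $O(1)$ candidates examined. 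Each point thus costs $O(1)$ work, for $O(n)$ total. Finally, one more pass over $\PntSet$ appends each point to the list associated with its assigned center; since every point lands in exactly one list this takes $O(n)$ time and space, and produces for every center the set of points for which it is a nearest center.

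There is no deep obstacle here: the corollary is essentially a bookkeeping consequence of \lemref{net}. The only points needing care are the two grid invariants that make the neighborhood search both sound and cheap --- every cell holds at most one net point (so each neighborhood contributes $O(1)$ candidates and the per-center lists are trivial to build) and a point's genuine nearest net point always lies in its constant-size neighborhood (so the local search is not merely fast but also correct) --- and both follow immediately from the geometry of the grid already set up in the proof of \lemref{net}.
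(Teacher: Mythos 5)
Your proposal is correct and matches the paper's own justification, which likewise observes that the nearest net point to any $\pnt$ lies in one of the $O(1)$ cells of $\GridNbr{\pnt}{r}$ and that each grid cell contains at most one net point, giving a constant-time nearest-center query per point. You simply spell out the bookkeeping (cell-to-net-point pointers and the final pass building per-center lists) in more detail than the paper does.
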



\section{Approximation Algorithm for \lbc}
\seclab{lbc}


Our aim is to get an efficient, small constant factor approximation for
the following problem.

\begin{problem}[\LowerBoundedCenter/\lbc] Let $\PntSet$ be a set of
    $n$ points in $\Re^d$, and let $\lb>0$ be an integer parameter.
    We wish to find a set of \emphi{centers} $\centers\subseteq
    \PntSet$, and an assignment of the points in $\PntSet$ to the
    centers in $\centers$, such that every center in $\centers$ gets
    at least $\lb$ points of $\PntSet$ assigned to it.

    The \emphi{price} of the clustering is the maximum distance of a
    point of $\PntSet$ to its nearest center in $\centers$. The
    optimal \lbc clustering is the one minimizing this maximum price.
\end{problem}

We now present two approximation algorithms for \lbc.  
The first is an $O(n^{4/3}\polylog(n))$ time $4$-approximation algorithm 
and the second an $O(n \log( n/\eps ))$ time $(4+\eps)$-approximation algorithm.  
One can get a $2$-approximation using flows, as shown in 
\cite{apftkkz-aac-10}, though the running time is significantly slower.


\subsection{Approximation Algorithms}
Let $P$ be a set of points and let $C \subseteq P$ be a set of
centers. The \emph{nearest center} assignment for $P$ and $C$ is the
assignment that maps each point of $P$ to its closest center in $C$.

\begin{definition}
    Given a point set $\PntSet$, let $\netX{\PntSet}{r}$ denote an
    $r$-net $N$ of $\PntSet$, along with the nearest center assignment
    of the points of $\PntSet$ to the centers of $N$.
\end{definition}

Note that, using the algorithm of \corref{valid}, one can compute
$\netX{\PntSet}{r}$ for a point set in $\Re^d$ in linear time.

\medskip%
\begin{definition}%
    \deflab{valid}%
	Let $\PntSet$ be a set of points. The net $\netX{\PntSet}{r}$ is
	\emphi{valid}, with respect to a lower bound $\lb$, if every
	center in $\netX{\PntSet}{r}$ is assigned at least $\lb$ points
	by the nearest center assignment for $\PntSet$ and
	$\netX{\PntSet}{r}$.
\end{definition}

Note that if $\netX{\PntSet}{r}$ is valid then we have a solution to
\LowerBoundedCenter of price $\leq r$.

\begin{lemma}%
    \lemlab{four}%
	Let $\pth{\PntSet,\lb}$ be an instance of $\lbc$.  Let $\cset$ be
	an $\alpha\ropt$-net for $\PntSet$, for $\alpha\geq 4$, where
	$\ropt = \roptX{\PntSet}{\lb}$ is the price of the optimal
	solution for the given \lbc instance.  Then the centers along
	with the nearest center assignment is an $\alpha$-approximate
	solution to \lbc.

    In particular, if $\PntSet$ is a set of $n$ points in $\Re^d$, and
    given a distance $x$, then one can decide, in linear time, if
    $\ropt \leq x$ or $\ropt > x/4$ (if $\ropt \in [x/4,x]$ either
    answer might be returned).
\end{lemma}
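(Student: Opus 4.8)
The plan splits into two parts: first the structural claim that an $\alpha\ropt$-net with $\alpha\ge4$ already yields a valid clustering, and then the $O(n)$ decision procedure, which falls out of \corref{valid}. For the structural claim I would fix an optimal \lbc solution, consisting of a center set $C^{*}\subseteq\PntSet$ and an assignment $\sigma^{*}$ of every point of $\PntSet$ to a center of $C^{*}$, with each center of $C^{*}$ receiving at least $\lb$ points and $\distX{\pnt}{\sigma^{*}(\pnt)}\le\ropt$ for all $\pnt$. Set $r=\alpha\ropt$, so $r\ge4\ropt$ and $\cset$ is an $r$-net of $\PntSet$. The goal is to show that, under the nearest-center assignment of $\PntSet$ to $\cset$, every net point receives at least $\lb$ points, which I would establish via two observations. (a) If $\distX{\pnt}{c}\le r/2$ for some $c\in\cset$, then $\pnt$ is assigned to $c$: for any other net point $c'\in\cset$, net separation gives $\distX{\pnt}{c'}\ge\distX{c}{c'}-\distX{c}{\pnt}>r-r/2=r/2\ge\distX{\pnt}{c}$, so $c$ is strictly the closest net point to $\pnt$. (b) Fix $c\in\cset$ and put $o=\sigma^{*}(c)\in C^{*}$; every point $\pnt$ with $\sigma^{*}(\pnt)=o$ has $\distX{\pnt}{c}\le\distX{\pnt}{o}+\distX{o}{c}\le\ropt+\ropt=2\ropt\le r/2$, so by (a) it is assigned to $c$. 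Hence the entire optimal cluster of $o$ is assigned to the single net point $c$; since $o\in C^{*}$ that cluster has at least $\lb$ points, so $c$ receives at least $\lb$ points. As $c$ was arbitrary, $\netX{\PntSet}{r}$ together with its nearest-center assignment is valid (\defref{valid}), hence, as noted just after \defref{valid}, a feasible \lbc solution; its price is at most $r=\alpha\ropt$ since, by \defref{net}, each point of $\PntSet$ lies within $r$ of its nearest net point. So it is an $\alpha$-approximation.

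For the decision procedure, given $x$ I would apply \corref{valid} to compute, in $O(n)$ time, the $x$-net $\cset=\netX{\PntSet}{x}$ of $\PntSet$ together with the number of points assigned to each net point by the nearest-center assignment, and then check in $O(n)$ time whether every net point gets at least $\lb$ points. If so, $\cset$ is a feasible solution of price $\le x$, so $\ropt\le x$, and I return ``$\ropt\le x$''. If not, then $x<4\ropt$: otherwise $\cset$, being the $x$-net, would be an $\alpha\ropt$-net with $\alpha=x/\ropt\ge4$, which the structural claim declares valid --- a contradiction. Hence $\ropt>x/4$, and I return ``$\ropt>x/4$''. When $\ropt\in[x/4,x]$ the check may go either way, but both answers satisfy the stated guarantee; the degenerate case $\ropt=0$ is covered because then the $x$-net is trivially valid. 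Everything runs in $O(n)$ time.

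The only step with genuine content is the ball argument (a)--(b): the factor $4$ is exactly what makes a ball of radius $2\ropt$ about a net point --- already large enough to contain a whole optimal cluster --- still small enough (radius $\le r/2$) that net separation forces all of it onto that one net point. Everything else is triangle inequalities and an invocation of \corref{valid}; the only thing to keep in mind is the implicit feasibility assumption $\cardin{\PntSet}\ge\lb$, which is what makes $\ropt$ well defined.
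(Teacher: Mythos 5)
Your proof is correct and follows essentially the same route as the paper: the paper also argues that the balls of radius $\alpha\ropt/2$ around net centers are disjoint (so nearest-center assignment sends everything in such a ball to its center) and that each such ball swallows the entire optimal cluster containing that center, with the second part likewise reduced to checking validity of $\netX{\PntSet}{x}$ via \corref{valid}. Your write-up merely makes the strict-closeness step and the triangle-inequality bound $2\ropt\le r/2$ explicit.
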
 

\parpic[r]{%
   \begin{minipage}{0.25\linewidth}%
       \includegraphics[width=.98\linewidth]{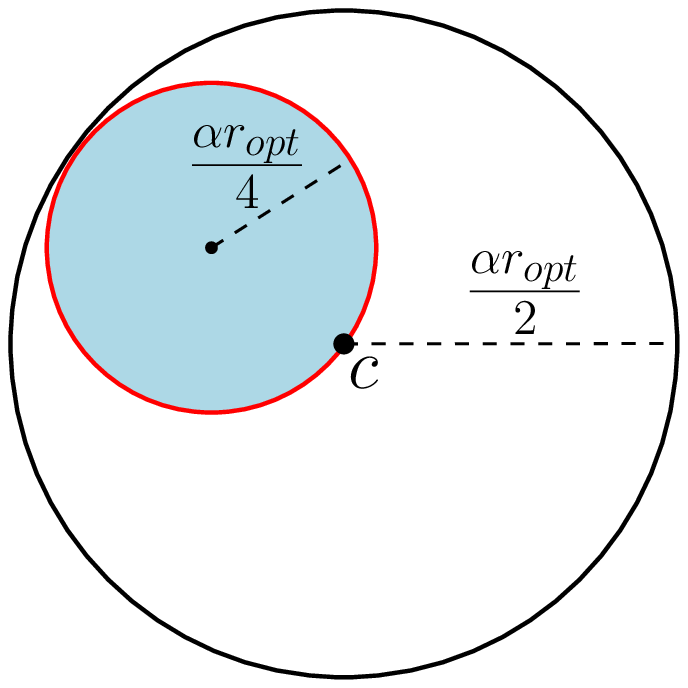}
   \end{minipage}}

\ProofFake{%
   We need to verify that the lower bound is satisfied for each
   center.  By the definition of a net, the distance between any two
   centers in $\cset$ is greater than $\alpha\ropt$.  Therefore, if we
   create a ball of radius $\alpha\ropt/2$ around each center, then
   these balls will be pairwise disjoint and thus all points in these
   balls are assigned to the center of the ball that contains them
   (recall that the assignment is a closest center assignment).
   Consider such a ball around a center $c$.  If this ball contains
   an entire ball from the optimal solution, then $c$'s lower bound
   will be satisfied.  So consider the ball which covers $c$ in the
   optimal solution.  This ball has radius $\ropt\leq \alpha\ropt/4$,
   since $\alpha\geq 4$.  However any ball covering $c$ of radius
   $\leq \alpha\ropt/4$ is entirely contained in $c$'s ball of radius
   $\alpha\ropt/2$.

   The second part follows by using the algorithm of \corref{valid} to
   decide whether or not $\netX{\PntSet}{x}$ is valid, see
   \defref{valid}.
}%
\bigskip%

The lemma above implies a general line of attack.  First find $\ropt$
(or an upper bound to $\ropt$ that is a constant factor larger), and
then compute a $4\ropt$-net.

Consider an instance of \lbc and let $\dset$ be a set of positive
real numbers.  Let $\alpha, \beta$ be two values in $\dset$ such that
$\alpha<\beta$.  We say that $[\alpha, \beta]$ is an \emphi{atomic}
interval if $(\alpha, \beta) \cap \dset = \emptyset$.  Furthermore,
we say $[\alpha, \beta]$ is an \emphi{active} interval if
$\net(4\alpha, \PntSet)$ is invalid and $\net(\beta, \PntSet)$ is
valid. By \lemref{four}, if $[\alpha,\beta]$ is active then
$\alpha<\ropt\leq \beta$.  

\subsubsection{Using Exact Distances}

Let $\dset$ be the set of all $O(n^2)$ distances between pairs of
points.  Note that $\ropt \in\dset$. We can now do a binary search for
$\ropt$, using median selection at each iteration to find a
candidate value $x$ (in the current set of still relevant values), and
then use \lemref{four} to decide if the value $4x$ is too large or too
small.  Median selection and pivoting on a set of size $s$ takes
$O(s)$ time.  Since the size of the set decreases by a half each time,
we get a geometric series and so the total work involved for median
selection and pivoting is $O(\cardin{\dset})$.  The recursion takes
$O(\log n)$ rounds to bottom out and $O(n)$ time (by \corref{valid})
is spent in each round computing and verifying a net.  Therefore the
overall running time is $O(n^2+n\log n) = O(n^2)$.

The running time can be further improved using distance selection.
One can compute the $k$\th smallest distance in $\dset$ in $O(n\log n
+ n^{2/3}k^{1/3}\log^c n)$ time, for some small constant $c$
(Chan \cite{c-esd-01} shows a randomized algorithm for $c=5/3$).  In
particular, the median can be computed in $O(n^{4/3} \polylog(n))$
time and so performing a binary search using median selection leads to
the following result.

\begin{lemma}
    Given an instance $\pth[]{\PntSet,\lb}$ of \LowerBoundedCenter in
    $\Re^d$, one can compute a $4$-approximation in $O(n^{4/3}
    \allowbreak \polylog(n))$ time.
\end{lemma}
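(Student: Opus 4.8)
The plan is to binary search for $\ropt$ over the set $\dset$ of $\binom{n}{2}$ interpoint distances, using \lemref{four} as the comparison primitive, and to avoid materializing this quadratic-size set by locating the needed medians with fast distance selection. Recall that \lemref{four} already supplies the primitive: given a value $x$, one computes $\netX{\PntSet}{x}$ and tests its validity in $O(n)$ time via \corref{valid}, which certifies either that $\ropt \le x$ (when the net is valid, as it is then itself a feasible clustering of price $\le x$) or that $\ropt > x/4$ (when the net is invalid, since any $\alpha\ropt$-net with $\alpha \ge 4$ is valid). Applying this with $x = 4m$ gives a clean test $P(m)$: if $\netX{\PntSet}{4m}$ is valid then $\ropt \le 4m$, and if it is invalid then $\ropt > m$.

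First I would dispose of the trivial cases in linear time: if $\lb \le 1$ output each point as its own center; if $n < \lb$ report infeasibility; and if every location holds at least $\lb$ co-located points, output the price-$0$ clustering. Afterwards $\ropt$ is a positive element of $\dset$, the predicate is false at the sentinel value $0$, and it is true at $\max\dset$ (a single center covers all $n \ge \lb$ points). Now run a binary search over the sorted sequence $d_1 \le \cdots \le d_{\binom n2}$ of distances, with $d_0 := 0$, maintaining a rank interval $[\rho_{\mathrm{lo}},\rho_{\mathrm{hi}}]$ under the invariant that $P(d_{\rho_{\mathrm{lo}}})$ is false and $P(d_{\rho_{\mathrm{hi}}})$ is true. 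Each round picks the midpoint rank $m$, recovers the distance $d_m$ by distance selection — as quoted above, the $k$\th smallest distance in $\dset$ is computable in $O(n^{4/3}\polylog n)$ time, here with $k \le \binom n2$ — evaluates $P(d_m)$ in $O(n)$ time, and halves the rank interval accordingly. After $O(\log n)$ rounds the interval has width one.

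For correctness, at termination $\rho_{\mathrm{hi}} = \rho_{\mathrm{lo}} + 1$, so since the sequence is sorted no element of $\dset$ lies strictly between $d_{\rho_{\mathrm{lo}}}$ and $d_{\rho_{\mathrm{hi}}}$. As $P(d_{\rho_{\mathrm{lo}}})$ is false, $\ropt > d_{\rho_{\mathrm{lo}}}$; since $\ropt \in \dset$, this forces $\ropt \ge d_{\rho_{\mathrm{hi}}}$. As $P(d_{\rho_{\mathrm{hi}}})$ is true, the net $\netX{\PntSet}{4 d_{\rho_{\mathrm{hi}}}}$ is a feasible \lbc clustering of price at most $4 d_{\rho_{\mathrm{hi}}} \le 4\ropt$, so outputting its centers together with the nearest-center assignment — recomputed in $O(n)$ time by \corref{valid} — is a $4$-approximation.

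Each of the $O(\log n)$ rounds costs $O(n^{4/3}\polylog n)$ for the distance selection plus $O(n)$ for one net computation and verification, so the total is $O(n^{4/3}\polylog n)$, with the preprocessing adding only $O(n)$. The one step beyond bookkeeping — and the place I expect any difficulty — is extracting the distance of a prescribed rank without sorting $\dset$; this is exactly what distance selection provides, and since it is invoked only $O(\log n)$ times it dominates the running time. I would also note that non-monotonicity of net validity in the resolution is harmless here: the binary search need only end with a consecutive pair obeying the stated invariant, which it does by construction, and the exact step $\ropt \ge d_{\rho_{\mathrm{hi}}}$ uses nothing beyond $\ropt$ lying in $\dset$.
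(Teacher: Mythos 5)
Your proposal is correct and follows essentially the same route as the paper: binary search over the set of $O(n^2)$ pairwise distances, using Chan's distance selection to extract the rank-$k$ distance in $O(n^{4/3}\polylog n)$ time at each of the $O(\log n)$ rounds, with the linear-time net-validity test of \lemref{four} as the decision procedure. Your added care about edge cases, the exact invariant, and the non-monotonicity of net validity only makes explicit what the paper leaves implicit.
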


\subsubsection{Using Approximate Distances}

\begin{lemma}%
    \lemlab{interval}%
    Given an instance $\pth[]{\PntSet,\lb}$ of \LowerBoundedCenter in
    $\Re^d$, and an interval $[x,y]$ that contains $\ropt =
    \ropt\pth{\PntSet, \lb}$, and $\eps>0$, then one can compute a 
    $(4+\eps)$-approximation to $\ropt$ and its associated clustering
    in $O\pth{ n \log \pth{ 2 \eps^{-1} y/x }}$ time.
\end{lemma}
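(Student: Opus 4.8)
The plan is to reduce the problem to the factor-$4$ decision procedure of \lemref{four} and then run a doubling-and-bisection search tuned to that procedure's multiplicative slack. Abbreviate $\ropt = \roptX{\PntSet}{\lb}$. For any radius $v>0$ we can, by \corref{valid}, compute $\netX{\PntSet}{v}$ together with its nearest-center assignment and test whether it is valid (\defref{valid}) in $O(n)$ time, and this call yields one of two one-sided certificates: if $\netX{\PntSet}{v}$ is valid it is a feasible \lbc solution of price $\le v$, so $\ropt\le v$; if it is invalid then, since any $v$-net with $v\ge 4\ropt$ is valid by \lemref{four}, we conclude $\ropt > v/4$. In the terminology introduced above, an interval $[\alpha,\beta]$ is \emph{active} exactly when the invalidity certificate has fired at $4\alpha$ and the validity certificate at $\beta$; then $\alpha<\ropt\le\beta$, and from the successful call at $\beta$ we are holding a concrete valid net of price $\le\beta$, i.e.\ a feasible clustering that $(\beta/\alpha)$-approximates the optimum.

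First I would set up an initial active interval. Take $\beta_0 = 4y$: since $\ropt\le y$ we have $\beta_0\ge 4\ropt$, so $\netX{\PntSet}{\beta_0}$ is valid by \lemref{four}; compute it (and its assignment) in $O(n)$ time. Take $\alpha_0 = x/8$: since $\ropt\ge x > x/2 = 4\alpha_0$, any $(4\alpha_0)$-net has price $\le 4\alpha_0 < \ropt$ and therefore cannot be a feasible solution, i.e.\ is invalid, so this needs no call. Thus $[\alpha_0,\beta_0]$ is active with ratio $\rho_0 = \beta_0/\alpha_0 = 32\,y/x$.

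The heart of the argument is the shrinking step. Given an active interval $[\alpha,\beta]$ of ratio $\rho = \beta/\alpha > 4+\eps$, probe at $\gamma = 2\sqrt{\alpha\beta}$; note $\gamma\in(\alpha,\beta)$ precisely because $\rho>4$. If $\netX{\PntSet}{\gamma}$ is valid then $\ropt\le\gamma$ and $[\alpha,\gamma]$ is active, of ratio $\gamma/\alpha = 2\sqrt{\rho}$; if it is invalid then $\ropt>\gamma/4$ and $[\gamma/4,\beta]$ is active, of ratio $4\beta/\gamma = 2\sqrt{\rho}$. The point of this choice of $\gamma$ is that the two possible answers of the (adversarial) oracle shrink the ratio by the same amount, so the search is oblivious to the answer. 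Writing $\rho_k = 4\cdot 2^{e_k}$ for the ratio after $k$ steps, the update $\rho\mapsto 2\sqrt{\rho}$ reads $e_{k+1}=e_k/2$, hence $e_k = e_0/2^k$ with $e_0 = \log_2(\rho_0/4) = 3+\log_2(y/x)$. We stop the search once $\rho_k\le 4+\eps$, i.e.\ once $e_k\le\log_2(1+\eps/4)$; since $\log_2(1+\eps/4)=\Omega(\eps)$ for $\eps\in(0,1]$, this happens once $2^k = \Omega(e_0/\eps)$, that is after $k = O\pth{\log\log(y/x)+\log(1/\eps)}$ steps, in particular within the budget $O\pth{\log(y/x)+\log(1/\eps)} = O\pth{\log(2\eps^{-1}y/x)}$.

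When the search stops we are holding a valid net, together with its nearest-center assignment, of price $\le\beta$, and $\ropt\in(\alpha,\beta]$ with $\beta\le(4+\eps)\alpha$; hence this clustering has price $\le\beta\le(4+\eps)\alpha<(4+\eps)\ropt$, so it is the required $(4+\eps)$-approximation, and the value $\beta$ is a $(4+\eps)$-approximation of $\ropt$ itself. Since each step, and the initialization, costs $O(n)$ by \corref{valid}, the total running time is $O\pth{n\pth{\log\log(y/x)+\log(1/\eps)}} = O\pth{n\log(2\eps^{-1}y/x)}$. The one genuine subtlety to watch is the interaction between the factor-$4$ gap of the decider and the fact that validity of $\netX{\PntSet}{\cdot}$ is not monotone in the radius (the net is not a canonical object), which rules out a naive binary search either on the decider's output or on net-validity; the balanced probe $\gamma = 2\sqrt{\alpha\beta}$, with its self-improving recurrence $\rho\mapsto 2\sqrt{\rho}$ converging to $4$, is exactly what sidesteps this, and everything else is routine.
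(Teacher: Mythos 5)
Your proof is correct, but it follows a genuinely different route from the paper's. The paper first discretizes $[x,y]$ into the geometric grid $r_i = x(1+\eps/8)^i$ with $M = O(\eps^{-1}y/x)$ values, then binary-searches the indices for an adjacent pair with $\netX{\PntSet}{r_i}$ invalid and $\netX{\PntSet}{r_{i+1}}$ valid (the endpoints are forced, so a sign-change search needs no monotonicity), and outputs $r_{i+1} \le (1+\eps/8)\cdot 4\ropt$; this costs $O(\log M) = O(\log(y/x) + \log(1/\eps))$ net computations. You instead run a gapped search directly on the continuous interval, choosing the probe $\gamma = 2\sqrt{\alpha\beta}$ so that both answers of the factor-$4$ decider shrink the ratio $\rho=\beta/\alpha$ to $2\sqrt{\rho}$; writing $\rho = 4\cdot 2^{e}$ this halves $e$ each round, so you terminate in $O(\log\log(y/x) + \log(1/\eps))$ iterations --- asymptotically fewer probes than the paper in the $y/x$ dependence, and still within the stated $O(n\log(2\eps^{-1}y/x))$ budget. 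Your bookkeeping checks out: the initial interval $[x/8,\,4y]$ is active (invalidity at $x/2<\ropt$ is free, validity at $4y\ge 4\ropt$ follows from \lemref{four}), the probe lies strictly inside whenever $\rho>4$, each certificate correctly updates the active interval, and the final valid net of price $\beta \le (4+\eps)\alpha < (4+\eps)\ropt$ is the desired clustering. The only quibble is your closing remark: a ``naive'' binary search on validity is not quite ruled out --- the paper's transition-finding search over a discretized grid is exactly that and is sound --- the real obstruction is only to bisecting the continuous interval without first accounting for the factor-$4$ slack, which both your balanced probe and the paper's $\eps$-fine discretization address in different ways.
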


\begin{proof}
	Let $r_i = x(1+\eps/8)^i$, for $i=0, \ldots, M =
	\floor{\log_{1+\eps/8} \pth{16y /x}} = O\pth{ \eps^{-1} y/x }$.
	Using binary search and the algorithm of \lemref{four}, find an
	index $i$, such that $\netX{\PntSet}{r_{i+1}}$ is valid but
	$\netX{\PntSet}{r_{i}}$ is invalid (see \defref{valid}); if no
	$r_i$ is invalid then $x=\ropt$. Clearly, $r_{i+1}$ is the
	required approximation. Since this procedure uses the algorithm
	of \lemref{four} $O\pth{ \log M}$ times, the claim follows.
\end{proof}

The above does not solve the general problem, as the spread of the
interval containing the solution (i.e., $y/x$) might be arbitrarily
large.  However, using \WSPD{}s \cite{ck-dmpsa-95} the running time
can be improved to $O(n\log n+n\log (1/\eps))$.  In particular, one
can use \WSPD{}s to compute, in $O(n\log n)$ time, a set $\dset$ of
distances, of size $O(n)$, such that $\dset$ contains an active atomic
interval $[x, y]$, and $y\leq cx$ for some constant $c$ (for more
details, see \cite{dhw-afdrc-12}).  Given $\dset$, such an interval
can be found with binary search in $O(n\log n)$ time using
\lemref{four} as the decision procedure.  Now, using \lemref{interval}
on this interval results in the desired approximation.

\begin{lemma}%
    \lemlab{wspd:lbc}%
    Given an instance of \LowerBoundedCenter in $\Re^d$, one can
    compute a $(4+\eps)$-approximation in $O(n \log( n/\eps ) )$ time.
\end{lemma}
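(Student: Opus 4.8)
The plan is to follow the two-phase strategy that \lemref{four} sets up: first localize $\ropt$ to an interval of constant spread, then refine inside that interval with \lemref{interval}.

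\emph{Coarse localization via \WSPD{}.} First I would compute a well-separated pair decomposition of $\PntSet$ (with some fixed separation constant $s$) in $O(n\log n)$ time, obtaining $O(n)$ pairs. From each pair I extract its representative distance — the distance between the pair's representative points — together with a constant number of fixed scaled copies of it (e.g.\ dividing by powers of a small constant), collecting all of these into a set $\dset$ of $O(n)$ positive reals. The point of the scaled copies is to guarantee that the atomic interval of $\dset$ straddling the validity threshold has endpoints within a constant multiplicative factor of one another. The structural claim I need is that $\dset$ contains an \emph{active atomic} interval $[x,y]$ with $y \le cx$ for an absolute constant $c$: since $\ropt$ is realized as the distance from some point of $\PntSet$ to its assigned center, it lies within a $(1\pm 1/s)$ factor of some representative distance, and together with the scaled copies this forces the relevant atomic interval of $\dset$ to bracket the threshold tightly; by \lemref{four}, $\net(4x,\PntSet)$ is then invalid and $\net(y,\PntSet)$ is valid, so $x < \ropt \le y$. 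This is exactly the reduction carried out in the style of \cite{dhw-afdrc-12}, and I would cite it for the bookkeeping details.

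\emph{Finding the active interval.} Validity of $\netX{\PntSet}{r}$ is monotone in $r$ (enlarging the net radius can only make it easier for each center to collect $\lb$ points), and by \corref{valid} each validity test takes $O(n)$ time. Hence, after sorting $\dset$ in $O(n\log n)$ time, a binary search over $\dset$ using the decision procedure of \lemref{four} finds the active atomic interval $[x,y]$ in $O(n\log n)$ time total.

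\emph{Refinement.} Finally I apply \lemref{interval} to the interval $[x,y]$, which contains $\ropt$ and satisfies $y/x \le c$. This produces a $(4+\eps)$-approximation together with its clustering in $O\pth{n\log\pth{2\eps^{-1}y/x}} = O\pth{n\log(c/\eps)} = O(n\log(1/\eps))$ time. Summing the three phases gives $O(n\log n) + O(n\log(1/\eps)) = O(n\log(n/\eps))$, as claimed. The one step that requires genuine care is the \WSPD{} reduction in the first phase — simultaneously keeping $\cardin{\dset} = O(n)$ and guaranteeing that $\dset$ contains an active atomic interval of constant spread; the monotone binary search and the invocation of \lemref{interval} are then routine.
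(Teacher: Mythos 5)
Your proposal matches the paper's own proof essentially step for step: the paper likewise uses a \WSPD to produce, in $O(n\log n)$ time, a set $\dset$ of $O(n)$ distances guaranteed to contain an active atomic interval $[x,y]$ with $y\leq cx$ (citing \cite{dhw-afdrc-12} for the details, as you do), finds that interval by binary search with \lemref{four} as the decision procedure, and then invokes \lemref{interval} on it. Your added remarks on the scaled copies of representative distances and the threshold behavior of the decision procedure are just a more explicit rendering of the same argument, so this is correct and not a different route.
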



\section{Hardness of Approximation of \lbc}
\seclab{hardness}

We now prove that it is not possible to approximate
\LowerBoundedCenter within a factor of $\sqrt{13}/2$ for points in
the plane, unless $\P=\NP$.  This will be done via a reduction from
Positive Rectilinear Planar 1 in 3 \ThreeSAT (\PRPOTSAT), which is
known to be $\NPHard$ \cite{mr-mwtnph-08}.  Our reduction is similar
in spirit to the reduction of Feder and Greene \cite{fg-oafac-88}. In
particular, we prove the following.

\begin{theorem}%
    \thmlab{hardness}%
	There is no polynomial time algorithm which approximates
	\LowerBoundedCenter on points in the Euclidean plane $\Re^2$
	within a factor of $\sqrt{13}/2 \approx 1.80$ unless $\P=\NP$.
\end{theorem}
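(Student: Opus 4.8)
The plan is to reduce from \PRPOTSAT{} (positive, rectilinear, planar $1$-in-$3$ \ThreeSAT{}, which is \NPHard{} by~\cite{mr-mwtnph-08}), following the geometry-of-gadgets approach used by Feder and Greene for the \kCenter{} problem~\cite{fg-oafac-88}. Given a formula $\Phi$ together with its rectilinear planar drawing --- variables on a line, each clause attached to its three variables by axis-parallel wires --- I would construct in polynomial time a point set $\PntSet\subseteq\Re^2$ and an integer lower bound $\lb$, and fix a threshold $\rho$, so that a certain family of ``intended'' clusterings of $\PntSet$ is in bijection with the satisfying assignments of $\Phi$ (i.e., those with exactly one true literal per clause, all literals being positive), and so that (a) an intended clustering has price $\le\rho$, while (b) every clustering of price $<(\sqrt{13}/2)\,\rho$ is intended. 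Granting this, if $\Phi$ is satisfiable then $\roptX{\PntSet}{\lb}\le\rho$, and if $\Phi$ is unsatisfiable then no intended clustering exists, so $\roptX{\PntSet}{\lb}\ge(\sqrt{13}/2)\rho$; hence a polynomial-time $(\sqrt{13}/2)$-approximation for \lbc{} would decide satisfiability of $\Phi$ (it returns price $\le(\sqrt{13}/2)\rho$ on every yes-instance, unachievable on any no-instance; a routine tightening of the gadgets makes the no-instance optimum strictly exceed this, covering the boundary case), giving $\P=\NP$. All points will sit on a fixed grid, $\rho$ will be a short grid distance, and the ``bad'' distance forced in (b) will be $\sqrt{2^2+3^2}=\sqrt{13}$ after rescaling --- whence the constant $\sqrt{13}/2$.

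Three kinds of gadgets realize this. A \emph{wire gadget} is a chain of points placed at a fixed spacing along a rectilinear path of the drawing, bends included; the spacing is tuned so that any clustering of price $<(\sqrt{13}/2)\rho$ must cut the chain into consecutive groups --- one center per group, each of size $\ge\lb$ --- and only two such cuts exist, the two ``phases'' differing by a shift. The phase is the Boolean value the wire carries, and a small turn sub-gadget relays it around each corner without creating slack. A \emph{variable gadget} is the common source of all wires of one variable: a small rigid configuration whose price-$<(\sqrt{13}/2)\rho$ clusterings fall into exactly two families, forcing all outgoing wires into the same phase, i.e.\ one consistent truth value. A \emph{clause gadget} is the junction of the three wires of a clause; its local point set is built with a one-point deficit relative to $\lb$, arranged so that a true wire must hand off one point to the clause's deficient center, which can reach at most one such handed-off point within radius $\rho$ (the three wires enter from three directions), while a true wire whose point is not absorbed strands it. Hence the clause admits a price-$\le\rho$ completion exactly when exactly one incident wire is true: zero true wires leave the center short of $\lb$, two or more strand a point, and either way the local price jumps to $\ge(\sqrt{13}/2)\rho$. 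Composing the gadgets along the drawing yields $(\PntSet,\lb)$, and both directions of the equivalence then follow --- a satisfying assignment gives an intended clustering of price $\le\rho$, and any clustering of price $<(\sqrt{13}/2)\rho$ is intended at every gadget, so its phases form a satisfying $1$-in-$3$ assignment.

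The crux, and the main obstacle, is the \emph{rigidity analysis} behind (a) and (b): one must show that, gadget by gadget and then globally, no clustering has price strictly between $\rho$ and $(\sqrt{13}/2)\rho$, so that a cheap clustering really is intended and every deviation really costs $\ge(\sqrt{13}/2)\rho$. As in Feder--Greene-style reductions this is case-heavy: for each gadget one enumerates every way a radius-$r$ ball (with $\rho\le r<(\sqrt{13}/2)\rho$) can legally capture a group of $\ge\lb$ nearby points under the spacing constraints, and verifies that each admissible local choice is forced by its neighbors. A second, purely geometric technical point is choosing the exact grid coordinates of the clause junctions and the turn gadgets so that in every failure case the nearest point that may legally serve as the stranded point's center lies at grid distance exactly $\sqrt{13}$, with nothing admissible at the intermediate distances $\sqrt5,\sqrt8,3,\sqrt{10},\ldots$ --- this is what pins the inapproximability constant at exactly $\sqrt{13}/2$. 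Everything else --- that the drawing, and hence $\PntSet$, have polynomial size, that $O(1)$ points are placed per unit length, and that the reduction runs in polynomial time --- is routine.
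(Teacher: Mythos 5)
Your high-level plan is the same as the paper's: reduce from \PRPOTSAT, realize the drawing with wire, variable, and clause gadgets whose unit-price clusterings encode the two truth values as a ``phase,'' and show a gap between the intended price and the cheapest cheating price. But as written the proposal has two genuine gaps. First, you explicitly defer the rigidity analysis --- ``one must show that \ldots no clustering has price strictly between $\rho$ and $(\sqrt{13}/2)\rho$'' --- and that analysis \emph{is} the proof; without concrete gadget coordinates and the accompanying case enumeration there is nothing to check. In particular your clause gadget (a center with a one-point deficit that must absorb a point handed off by exactly one true wire) is a different mechanism from the one the paper uses (a near-regular hexagon of weight-$2$ points with wires attached at alternating corners, where the \emph{parity} of the number of points between adjacent attachment points forces ``exactly one corner covered''), and you give no argument that your version can be realized so that every failure mode costs at least $\sqrt{13}/2$ rather than, say, $\sqrt{5}/2$ or $3/2$.

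Second, your explanation of where the constant comes from is wrong, and this matters because the constant is the theorem. You posit a grid distance $\sqrt{2^2+3^2}=\sqrt{13}$ ``after rescaling.'' In the actual construction the binding configuration is not a grid diagonal at all: it occurs at a splitter, where three near-unit-spaced chains meet at $120^\circ$, and the smallest disk that flips the signal while still capturing total weight $\ge\lb$ has diameter
\[
\sqrt{\left(\tfrac{3\sqrt{3}}{4}\right)^{2}+\left(\tfrac{5}{4}\right)^{2}}
=\frac{\sqrt{27+25}}{4}=\frac{\sqrt{13}}{2},
\]
measured against intended disks of diameter $1$; the analogous flip at a clause corner costs $\sqrt{2+\sqrt{3}}>\sqrt{13}/2$, so the splitter is the bottleneck. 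Without identifying this configuration you cannot justify $\sqrt{13}/2$ as opposed to some other constant, nor can you tune the ``near-unit'' spacing along wires (every other point at distance $\ge\sqrt{13}/2$) that the argument needs to rule out a disk swallowing three consecutive wire points. The skeleton is right, but the proposal stops exactly where the proof begins.
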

\subsection{The Setting}
\parpic[r]{\includegraphics[width=.4\linewidth]{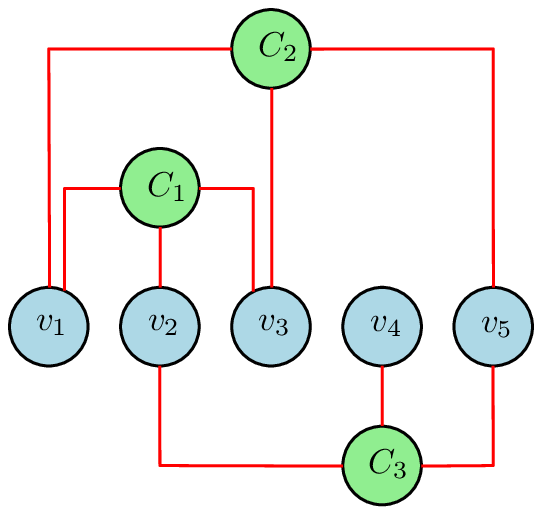}}
Let $\phi$ be a boolean formula in $3CNF$ form (i.e. a conjunction of
clauses, where each clause is a disjunction of three literals).  We say
$\phi$ is \emphi{positive} if it contains no negated variables.  We
say that $\phi$ is \emphi{planar} if the graph where we create a
vertex for each clause and variable, and an edge between each variable
node and clause node if the variable appears in this clause (either
positively or negatively), is a planar graph.  Moreover, the formula
$\phi$ is \emphi{rectilinear} planar if it has a planar embedding
where the variable vertices all lie on a horizontal line with
non-crossing three legged clauses above and below.  See figure on the
right.  Specifically, each vertex on the horizontal line will be
replaced by a unit disk, and an edge from a variable to a clause
consists of at most one vertical and at most one horizontal line
segment.  Such an embedding will be called a rectilinear planar
embedding.

\begin{problem}(\PRPOTSAT) An instance of Positive Rectilinear Planar
    One in Three 3SAT consists of a positive rectilinear planar $3CNF$
    formula $\phi$, along with a rectilinear planar embedding.  The
    problem is to determine if there is an assignment such that each
    clause has exactly one true variable.  We call such an assignment
    a satisfying assignment.
\end{problem}

Mulzer and Rote \cite{mr-mwtnph-08} prove that \PRPOTSAT is
\NP-Complete.  In the rest of this section we describe a reduction
from \PRPOTSAT to \LowerBoundedCenter.

Consider the rectilinear planar embedding of a \PRPOTSAT instance.  We
wish to think of this embedding as a circuit.  In this circuit each
variable $x$ acts as a signal generator which sends either a \True or
a \False signal to all the clauses which use $x$.  Each edge is then a
connecting wire, and clauses will act as logic gates which are
satisfied if and only if they receive one \True input and two \False
inputs.

The reduction will be to an instance of \LowerBoundedCenter with
$\lb=4$.  We will allow integer weighted points in the reduction.
Given an instance of \PRPOTSAT we will construct a point set such that
there will be a solution to the \lbc instance using unit disks if and
only if the \PRPOTSAT instance was satisfiable.

\subsection{Gadgets Used in Reduction}

\paragraph*{Idea.}
Consider placing points at every integer position along the $x$-axis,
each with weight $2$.  Now consider an instance of \LowerBoundedCenter
on this point set with a lower bound of $4$.  The points can be
covered using unit diameter disks by placing them over each successive
pair of points.  However, there are two ways to do this.  Either we
can place disks such that the left end of each disk covers a point at
an even integer position or such that it covers a point at an odd
integer position, see \figref{starter}.  Thus these two possible
solutions can be seen as either setting a variable to \True or \False.

In the following, a \emphi{wire} is a sequence of weight $2$ points
connecting two gadgets, with unit spacing in between adjacent points
such that no three points are contained within a disk of diameter
$<\sqrt{13}/2$.

\parpic[r]{%
   \begin{minipage}{0.4\linewidth}%
       \vspace{1.2cm}
       \includegraphics[width=.95\linewidth]{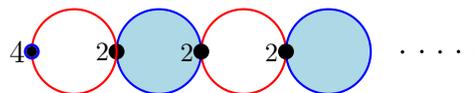}%
       \captionof{figure}{A variable gadget}%
       \figlab{starter}
   \end{minipage}}
\paragraph*{Signal Starter.}
For each variable we create a signal starter gadget.  This consists of
a horizontal chain of unit spaced points, where the weight of the
leftmost point is $4$ and each successive point has weight $2$.  A
valid solution to \lbc using unit disks can either cover the first
point of weight $4$ by itself or it can cover it and the first point
of weight $2$ together.  (See the blue and red solutions in the figure
to the right.)

\parpic[r]{%
   \begin{minipage}{0.38\linewidth}%
       \includegraphics[width=.97\linewidth]{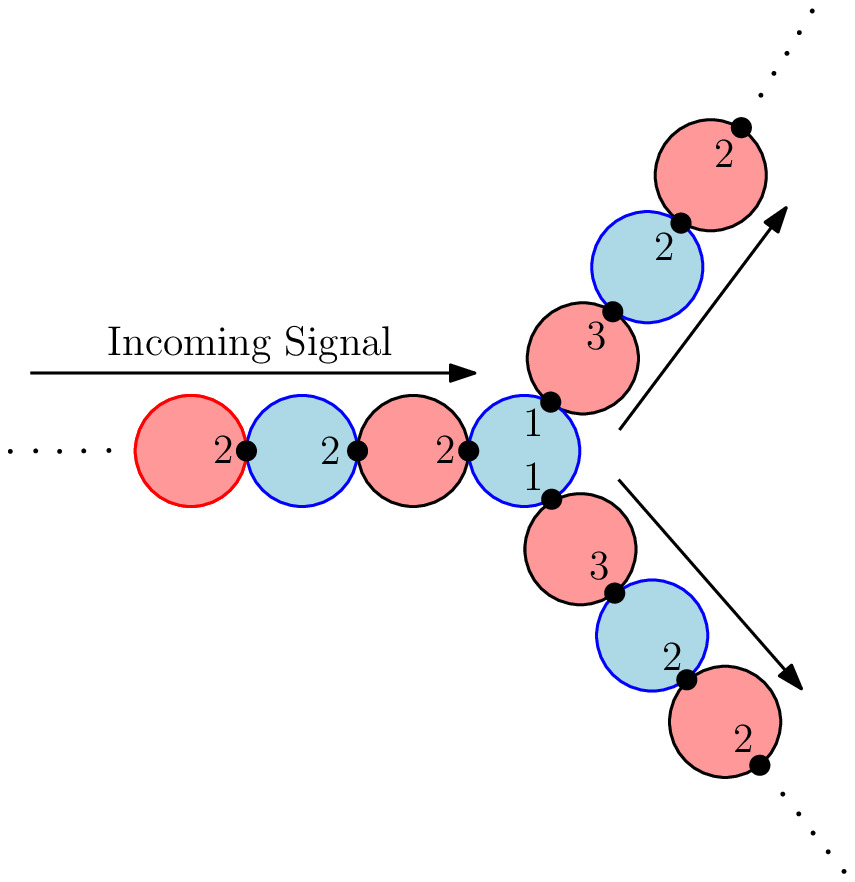}%
       \vspace{-1cm}%
       \captionof{figure}{} \figlab{split}
   \end{minipage}}

\paragraph*{Splitter.}
A variable might appear in multiple clauses and thus its signal will
need to be split multiple times so that each clause can get a copy.
\figref{split} shows how this is accomplished.  In the figure the
numbers next to each point represent its weight.  Observe that if one
is restricted to using unit diameter disks, then there are only two
possible feasible solutions, shown in red and blue in the figure.
Namely, if the red solution is the incoming signal then we know that
after the split, the outgoing signal must continue using the red disks
shown (and similarly for blue).

\vspace{.75cm}
\parpic[r]{%
   \begin{minipage}{0.32\linewidth}%
       \includegraphics[width=.9\linewidth]{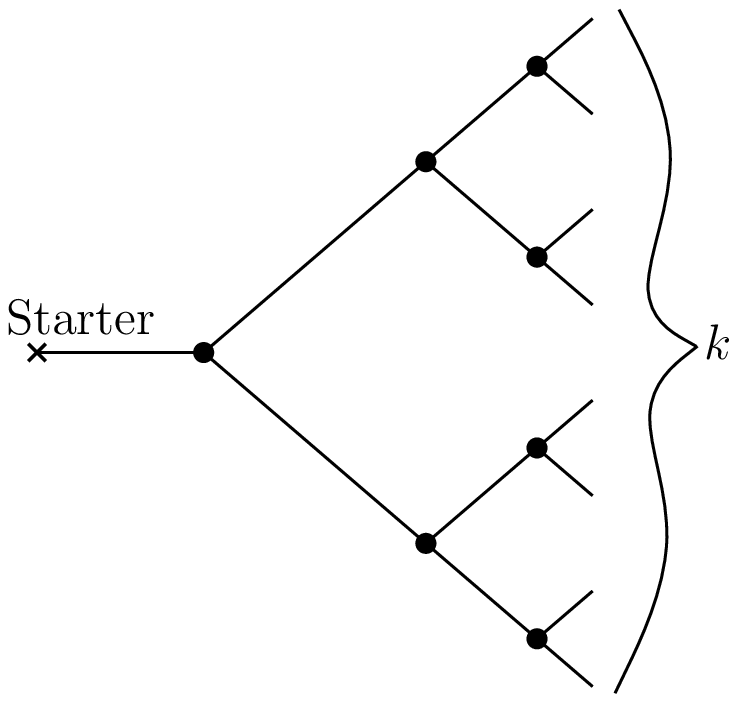}%
       \vspace{1cm}
   \end{minipage}}%
\vspace{-1cm}

\paragraph*{Variable Gadget.}
A variable gadget will consist of one signal starter gadget and $k-1$
signal splitter gadgets, where $k$ is the number of times the variable
appears in the given instance of \PRPOTSAT.  Specifically, we build a
binary tree out of the splitter gadgets where we attach the signal
starter gadget to the incoming wire of the splitter gadget
corresponding to the root of the tree.  The lengths of connecting
wires in the tree are such that there is a consistent signal exiting
from all the leaves.

\vspace{-0.13cm}%
 \parpic[r]{%
    \begin{minipage}{0.3\linewidth}%
        \includegraphics[width=.95\linewidth]{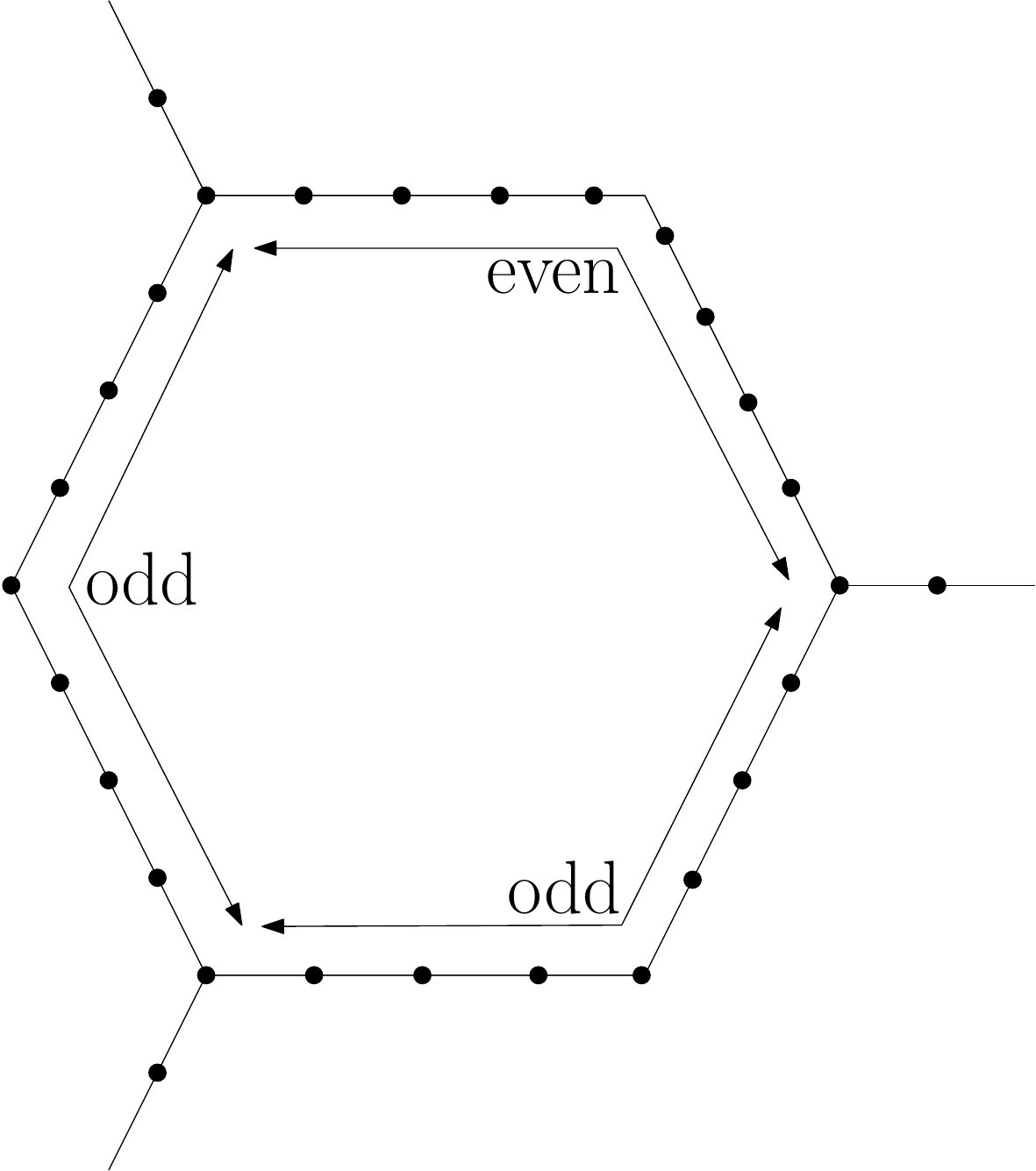}%
       \figlab{variable:2}
   \end{minipage}}

\paragraph*{Clause Gadget.}
Each clause has three variables and so it will have three incoming
wires from variable gadget leaves.  The parity of the length of these
wires will be the same for all variable clause pairs.  The clause
gadget will consist of a chain of points in the shape of a regular
hexagon (more precisely, it is close to a regular hexagon).  The wires
will attach to this hexagon at every other corner.  We require the
following properties from the points on the hexagon.

\begin{compactenum}[\quad(A)]
    \item All the points have weight 2.
    \item There is a point at each corner where a wire connects.
    \item The spacing between points along the hexagon and along the
    wire connecting to the hexagon is near unit length\footnote{Note
    that ideally we would always use unit spacing, though this may not
       be possible due to integrality and parity issues.  As will be
       explained more formally later, near unit distance will be good
       enough.}.

    \item The number of points on the hexagon in between adjacent
    incoming wires must be odd for two of the pairs, and even for the
    other pair. This can be done by slightly changing the length of
    the edges of the hexagon by slightly moving the relevant hexagon
    vertices into the center of the hexagon.

    This even/odd number of points guarantees that the clause can be
    satisfied if and only if the signal arrives on a single wire, see
    \figref{case:analysis} in \apndref{clause}.
\end{compactenum}

\parpic[r]{%
   \begin{minipage}{0.53\linewidth}%
     \begin{center}
       \includegraphics[scale=0.8]{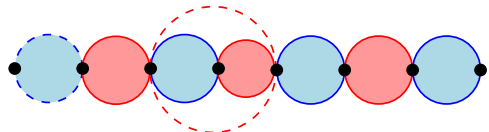}%
       \vspace{-0.3cm}
       \captionof{figure}{Dashed solution flips from blue to red.}
       \figlab{flip}
     \end{center}
   \end{minipage}
}%

\noindent\textbf{Distances along the wires.}
First, we make precise what we mean by a near-unit spacing. In order
to make the integrality and parity of the lengths of the wires and
clause gadgets work correctly we may not be able to always use
unit spacing.  However, all we require is that the spacing is such
that the signal being sent cannot flip.  Since there is a lower bound,
a flip can only happen if a disk takes too many of the points along
the wire or clause gadget.  So in order to prevent this for disks of
diameter $\leq \sqrt{13}/2$, all we require is that the spacing
between every other point along the wire or clause gadget be at least
$\sqrt{13}/2$. In particular, as demonstrated in the figure above,
along a wire we have flexibility, and we can allow the distances
between consecutive points to be slightly smaller than $1$.

\paragraph*{Putting the Gadgets Together.}
Let $G$ denote the rectilinear planar embedding of the \PRPOTSAT
instance.  Without loss of generality we may assume that the vertical
and horizontal segments which make up the edges of $G$ lie at integer
coordinates on a
grid.

\parpic[r]{%
   \begin{minipage}{0.378\linewidth}%
       \vspace{.8cm}
       \includegraphics[width=.97\linewidth]{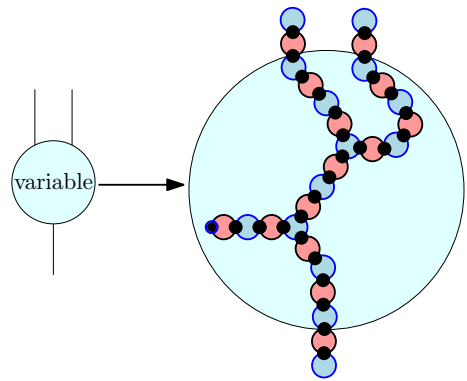}%
       \captionof{figure}{Variable gadget.}%
   \end{minipage}}

This embedding is transformed into an \lbc instance in the
natural way.  Namely, variables and clauses are replaced with variable
and clause gadgets, respectively, and edges are replaces with
connecting wires (see figure for an example of a variable
replacement).  Each time we insert a new gadget we may need to expand
the graph in order to fit in the new gadget.  Also, spacing between
edges may need to be increased so that points from different wires
have distance at least two, and so that wires can bend and connect to
gates properly without violating the properties that adjacent points
on a wire have near unit spacing and that no three points are within a
disk of diameter $\sqrt{13}/2$.  The reader should note that overall
the diameter of the embedding will only need to increase by a
polynomial factor. In particular, given an initial drawing of the
instance on a grid, one can initially scale it up by a polynomial
factor, and use the created space to construct all the gadgets
described above.

Overall, the resulting instance of \lbc can be computed in polynomial
time, and with a careful implementation the numbers used in the
representation are small (i.e., they each require $O( \log n)$ bits).

\subsection{Analysis}

\begin{lemma}
\lemlab{unit}
\RefProofInAppendix{unit}
    There is no polynomial time algorithm which computes
    the optimal solution to an instance of \LowerBoundedCenter unless
    $\P=\NP$.
\end{lemma}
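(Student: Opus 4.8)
The plan is to establish that the polynomial-time reduction outlined above is correct, namely that the \lbc instance $(\PntSet,\lb)$ with $\lb=4$ produced from a \PRPOTSAT instance $\phi$ (given together with its rectilinear planar embedding) admits a solution using unit-diameter disks if and only if $\phi$ has a $1$-in-$3$ satisfying assignment; since a unit-diameter solution is the minimum-price one the gadget spacing allows, an exact algorithm for \lbc would reveal which case holds. As \PRPOTSAT is \NP-complete \cite{mr-mwtnph-08} and the construction — variable gadgets, splitter trees, clause hexagons, connecting wires, together with the polynomial blow-up of the routing grid — is carried out in polynomial time with coordinates of $O(\log n)$ bits, this yields the lemma. (If one insists on the unweighted formulation, a weight-$w$ point may first be replaced by $w$ unit-weight points in a cluster whose diameter is negligible relative to every other distance in the instance, so the two versions are equivalent here.)

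The heart of the argument is the reverse direction. I would use the one structural invariant built into the gadgets: along every wire and every clause hexagon consecutive points lie within distance $1$ while every second point lies at distance at least $\sqrt{13}/2>1$, and points of different wires are at distance at least $2$. Hence every disk in a solution of price at most $1/2$ has diameter at most $1$, so it contains no two points of distinct wires and at most two points of any single chain; since each disk must gather weight at least $\lb=4$ while every chain point has weight $2$ (the unique weight-$4$ start point of each variable being the only exception), each disk must cover exactly a consecutive pair of chain points, or the start point alone, or the start point together with its neighbour. A price-$\le 1/2$ solution is therefore nothing but a domino tiling of every chain, and I would then check gadget by gadget that the only globally consistent tilings of a wire are the two intended ones: the signal starter pins down a well-defined state per variable, the splitter forces both outgoing tilings to match the incoming one, and — the decisive point — the clause hexagon, whose three inter-wire arcs carry an odd, an odd, and an even number of interior points (property~(D)), can be completed to a domino tiling exactly when precisely one of its three incoming wires arrives in the ``true'' state. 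Reading off, for each variable, the state selected by its starter then yields a Boolean assignment, and the clause condition shows it to be a $1$-in-$3$ satisfying assignment of $\phi$.

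For the forward direction I would run the construction backwards: from a $1$-in-$3$ satisfying assignment, tile each variable gadget according to the truth value, propagate the forced choice through the splitter tree down to every leaf, and at each clause apply the hexagon tiling guaranteed by property~(D) for its ``exactly one true incoming wire'' configuration. Every point is then covered and every disk carries weight at least $\lb$, so the clustering has price $1/2$. Combining the two directions, an exact algorithm for \lbc would decide \PRPOTSAT in polynomial time, forcing $\P=\NP$, which proves the lemma.

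The step I expect to be the main obstacle is the clause-gadget case analysis: one must verify that a near-regular hexagon with the prescribed odd/odd/even arc parities admits a domino tiling extending a given triple of incoming wire states precisely in the three ``one true, two false'' configurations, and — a closely related point — that no diameter-$\le 1$ disk can cheat at a hexagon corner by taking one point from each of the two arcs that meet there, which is exactly why an actual point must sit at every such corner (property~(B)) and why the near-unit spacing is constrained around bends. Up to the symmetry of the hexagon there are only $2^3$ incoming triples to inspect; this is the analysis deferred to \figref{case:analysis} in \apndref{clause}. A more routine but tedious point is confirming that, after all the local expansions and parity adjustments, the embedding still has polynomial size and $O(\log n)$-bit coordinates, which is handled by scaling the initial grid drawing up by a fixed polynomial factor before inserting any gadget.
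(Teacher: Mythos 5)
Your proposal is correct and takes essentially the same route as the paper: a polynomial-time reduction from \PRPOTSAT with $\lb = 4$, a structural argument that any sufficiently cheap solution must decompose into the intended consecutive pairs (your ``domino tilings'' are the paper's covered/exposed corners), and the odd/odd/even clause-hexagon parity case analysis, which both you and the paper defer to \figref{case:analysis} in \apndref{clause}. The only cosmetic difference is a radius-versus-diameter convention --- your ``price $1/2$'' solutions are the paper's ``unit diameter disks'' --- which does not affect the logic of the reduction.
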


\begin{proof:in:appendix:e}{\lemref{unit}}{unit}
    Consider an instance of \PRPOTSAT.  For the rectilinear planar
    embedding given we create a polynomial size instance of $\lbc$
    with $\lb=4$ as described in the previous section.  By
    construction we know that any solution to this instance of $\lbc$
    will require unit diameter disks.  We now prove that there is a
    solution using unit diameter disks if and only if the \PRPOTSAT
    instance had a satisfying solution.
    
    Call the point at which a wire connects to a clause gadget a
    \emphi{corner}.  We will say that a solution to the \lbc instance
    using unit diameter disks \emphi{covers} a corner if this point is
    covered along with the point just before it on the incoming wire.
    Otherwise we call the corner \emphi{exposed}.
    
    By the description of the gadgets from the previous section we
    know that, for a given variable, a solution using unit diameter
    disks either exposes all or covers all of its corresponding
    corners.  We now prove that the remaining points in a clause
    gadget can be covered using unit diameter disks if and only if
    exactly one corner was covered. See \figref{case:analysis} for an
    illustration of the following case analysis.
    
    Suppose the corner for a given wire is covered by that wire.  By
    simple case analysis one can show that if this is the case
    (regardless of which corner it was), then the remaining points on
    the hexagon of the clause gadget can be covered with unit diameter
    disks if and only if the other two wires leave their corners
    exposed.  Moreover, if all the wires leave their corner exposed,
    then there will be an odd number of points around the hexagon of
    the clause gadget and so there is no solution with unit diameter
    disks.  Therefore, by viewing a wire covering its corner as a
    \True and leaving it exposed as a \False, then the clause gadget
    can be covered with unit diameter disks if and only if there are
    two incoming \False's and one incoming \True.  Recall that this is
    exactly the requirement needed to satisfy each clause in the
    \PRPOTSAT instance. 
\end{proof:in:appendix:e}

\begin{lemma}
\lemlab{root}
\RefProofInAppendix{root}
    In the above construction, except for the unit disks specified in
    it, all other disks of weight at least $4$ must have radius at
    least $\geq \sqrt{13}/2$.
\end{lemma}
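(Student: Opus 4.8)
The plan is to derive the statement from a single geometric invariant of the whole construction, which I will call the \emph{gap property}: no two points of the constructed \lbc instance are at distance strictly between $1$ and $\sqrt{13}/2$. Recall that every point of every gadget has weight $2$, except for the (finitely many) weight-$4$ points of the signal-starter and splitter gadgets, and that a cluster is feasible only if the total weight assigned to it is at least $\lb=4$, so a feasible cluster either covers a weight-$4$ point or covers at least two (weight-$2$) points. Given the gap property, the lemma follows quickly; the work is in establishing it.

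Here is the quick deduction. Consider a feasible cluster $D$ (a center together with the points assigned to it) of weight at least $4$, and suppose the points of $D$ have diameter strictly less than $\sqrt{13}/2$. Every pairwise distance inside $D$ is then at most $1$: a larger one would lie in $(1,\sqrt{13}/2)$, impossible by the gap property, or be at least $\sqrt{13}/2$, impossible by the diameter bound. Next, $D$ cannot contain three points: since points of two different wires or gadget chains are kept at distance at least $2$ (see ``Putting the Gadgets Together''), three mutually-$(\le 1)$ points must all lie on one chain --- a wire or the boundary chain of one clause hexagon --- but then the two of them occupying the extreme positions along that chain differ by at least two positions, and the ``distances along the wires'' rule forces any two such points to be at distance at least $\sqrt{13}/2$, a contradiction. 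Hence $D$ covers at most two points: either a single weight-$4$ point (a degenerate disk of diameter $0$) or two points at distance at most $1$, which by direct inspection of the gadgets are either two consecutive points of a wire or of a clause hexagon, or a weight-$4$ point together with its unique neighbour at distance $1$ --- that is, exactly one of the unit-diameter disks used in the prescribed solutions. Since in a minimum-price solution one may assume each cluster's radius equals half the diameter of the points it covers, $D$ is a prescribed unit disk, contradicting the hypothesis on $D$. Therefore every weight-$\ge 4$ disk other than the prescribed unit disks has diameter at least $\sqrt{13}/2$.

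It remains to prove the gap property, which is where the real care is needed. Within a single wire, or along the boundary chain of a clause hexagon, consecutive points are placed at near-unit spacing, so at distance at most $1$; the ``distances along the wires'' requirement is precisely that every other point of the chain be at distance at least $\sqrt{13}/2$, and points of the same chain separated by more than two positions are even farther apart because the chains turn only at grid vertices and never fold back on themselves --- which is affordable since the initial grid embedding is first scaled up by a polynomial factor. The weight-$4$ points obey the same discipline: each has a single neighbour at distance exactly $1$ and every other point at distance at least $\sqrt{13}/2$. Finally, points of distinct wires, or of distinct gadget chains, are separated by distance at least $2 > \sqrt{13}/2$, again using the polynomial blow-up to provide room for routing bends and for connecting wires into splitters and into the hexagonal clause gadgets. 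Going through the four gadget shapes (signal starter, splitter, wire, clause hexagon) and checking these cases is routine; the main obstacle --- and the only delicate point --- is to make sure that this routing never accidentally creates a pair of points at distance in the forbidden window $(1,\sqrt{13}/2)$, which is handled uniformly by maintaining the $\ge 2$ separation between distinct chains and the $\ge \sqrt{13}/2$ separation between non-consecutive points of the same chain. Combined with \lemref{unit} (whose case analysis is summarized in \figref{case:analysis}), the gap property shows that any solution whose disks all have diameter below $\sqrt{13}/2$ uses only prescribed unit disks, and such a solution exists exactly when the \PRPOTSAT instance is satisfiable, which yields \thmref{hardness}.
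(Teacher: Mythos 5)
Your argument rests entirely on the ``gap property'' --- that no two points of the construction are at distance strictly between $1$ and $\sqrt{13}/2$ --- and you declare its verification ``routine.'' This is where the proof breaks: the gap property is not established anywhere, and it almost certainly fails at exactly the places that matter. The construction has three-way junctions where wires meet at $120^\circ$ (the splitter centers, and the hexagon corners where wires attach to clause gadgets). At such a junction, two points lying on different branches at near-unit distance from the junction point are at distance roughly $2\sin 60^\circ=\sqrt{3}\approx 1.73$, which sits squarely inside your forbidden window $(1,\sqrt{13}/2)\approx(1,1.803)$. One cannot simply shrink the spacing near the junction to push such pairs below $1$ without violating the other requirement that every other point along a chain be at least $\sqrt{13}/2$ apart. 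Your auxiliary step --- that three mutually-close points must lie linearly ordered ``along one chain,'' so two of them are two positions apart --- also fails at these junctions, where a point can be adjacent to neighbors on three different branches. So the entire difficulty of the lemma has been deferred to a claim that is unproved and, as stated, false.

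The paper's proof takes the opposite, local route: it observes that a weight-$\ge 4$ disk other than the prescribed ones is dangerous only where it can flip the signal, i.e., at the $120^\circ$ junctions, and then does an explicit trigonometric computation there (the right triangle with the $30^\circ$ angle and hypotenuse $3/2$) showing that the smallest signal-flipping disk of weight at least $4$ at a splitter has diameter exactly $\sqrt{(3\sqrt{3}/4)^2+(5/4)^2}=\sqrt{13}/2$, and a similar computation at the clause corners gives $\sqrt{2+\sqrt{3}}>\sqrt{13}/2$. That case analysis at the junctions is the actual content of the lemma; it is precisely the part your write-up waves away. To repair your argument you would either have to weaken the gap property to apply only to pairs whose joint coverage flips the signal --- at which point you are forced into the paper's junction-by-junction computation anyway --- or exhibit a concrete splitter and clause-corner geometry satisfying your stronger property together with the parity and near-unit-spacing constraints, which you have not done and which does not appear possible.
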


\begin{proof:in:appendix:e}{\lemref{root}}{root}
    Intuitively, such a disk of relatively small radius can happen
    only where three wires meet, which is always at $120$ degrees. As
    far as the construction is concerned, this corresponds to the wire
    flipping its signal.

    \parpic[r]{%
       \begin{minipage}{0.4\linewidth}%
           {\includegraphics[width=.83\linewidth]{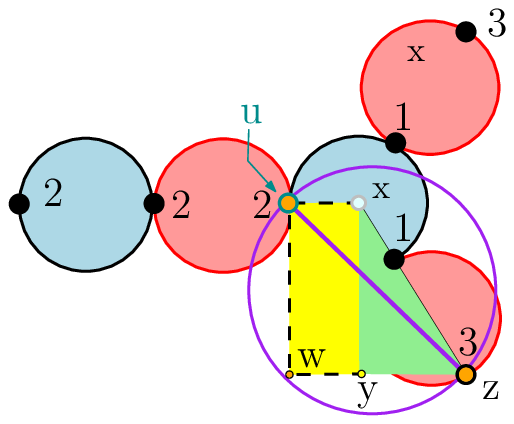}}%
           \vspace*{-0.4cm}
           \captionof{figure}{}%
           \figlab{flip2} \smallskip
       \end{minipage}}
    
    We now show that the signal can flip at a splitter only if disks
    of diameter $\geq \sqrt{13}/2$ are used.  Again since we have a
    lower bound, the problem is when a disk tries to take too many
    points.  This situation is shown in \figref{flip2}, where the
    optimal blue solution using unit disks instead attempts to take
    the larger disk covering the two orange points.  Observe that this
    is the smallest diameter disk which flips the signal and contains
    $\geq \lb$ points.  The diameter of this disk can be calculated by
    figuring out the dimensions of the pink triangle shown in the
    figure.  This is a right triangle with $\angle zxy = 30^\circ$.
    Therefore, since the length of the hypotenuse $xz$ is $3/2$, the
    length of $yz$ is $\distE{x z} \sin \angle zxy = \distE{xz}/2 =
    3/4$. As such, for the bottom of the triangle we have $\distE {x
       y} = \sqrt{\distE{xz}^2 - \distE{yz}^2} = \sqrt{9/4 - 9/16} =
    3\sqrt{3}/4$ and the height is $3/4$.  Therefore, since the yellow
    region has a height of $1/2$, we have that $\distE{uw} = \distE{xy
    } = 3\sqrt{3}/4$ and $\distE{wz} = 1/2 + 3/4 = 5/4$. As such,
    $\distE{uz} = \sqrt{ \distE{uw}^2 + \distE{wz}^2} =
    \sqrt{(3\sqrt{3}/4)^2+(5/4)^2} = \sqrt{27+25}/4 = \sqrt{13}/2$.
    Therefore, in order to guarantee that the signal does not flip,
    the solution must be restricted to disks of diameter
    $<\sqrt{13}/2$.

    One can verify, arguing similarly, that signal flipping at corners
    where wires connect to clause gadgets requires disks with diameter
    $\geq \sqrt{2+\sqrt{3}}>\sqrt{13}/2$. 
\end{proof:in:appendix:e}

\medskip

\begin{proofext}{\emph{\thmref{hardness}}}: %
    Consider an instance of \lbc obtained by converting a
    \PRPOTSAT instance as described in the previous section.  
    Clearly, our construction never admits a solution to the \lbc
    instance using smaller than unit diameter disks (regardless of
    whether the \PRPOTSAT instance was satisfiable).  We also know 
    by \lemref{unit} and \lemref{root}
    that there is a solution of diameter $d$, for any $1\leq d<
    \sqrt{13}/2$, if and only if the instance of \PRPOTSAT was
    satisfiable.  Therefore, if we could approximate the \lbc problem
    within a factor of $\sqrt{13}/2$ in polynomial time, then we could
    decide \PRPOTSAT in polynomial time.
\end{proofext}


\section{Conclusions}
We showed that \LowerBoundedCenter is both \APXHard and has a fast
constant factor approximation in low dimensional Euclidean space.  A
natural direction for future research is the lower bounded
version of the $k$-median problem.  A constant factor approximation
for this problem is already known \cite{as-iaglbfl-11,s-lbfl-10}.
Unlike \lbc, this problem is not known to be \APXHard,
though so far our current attempts to get a \PTAS have not panned out.

One can also consider adding an upper bound in addition to the lower bound
constraint, in essence specifying approximately the desired size of
the clusters.  It is not hard to verify that if the upper bound is at
least twice the lower bound then the algorithm presented in this paper
for \lbc carries over.  Otherwise, the problem seems to be
considerably more difficult.

\InConfVer{%

}
\InFullVer{%
   \bibliographystyle{alpha}%
}
\bibliography{lbc}%

\newpage
\appendix
\section{Clause Gadget Explained}
\apndlab{clause}
\vspace{-.5cm}
\begin{figure}[h!]
    \begin{tabular}{cc}
        \begin{minipage}{0.3\linewidth}
            \includegraphics[page=1,scale=.95]{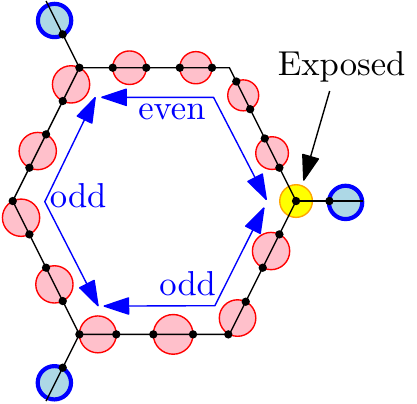}
        \end{minipage}
        &
        \hspace{.4cm}
        \begin{minipage}{0.6\linewidth}
        \vspace{.5cm}
            The signal does not arrive on any of the wires, and in any
            cover of this clause, one of the points is exposed (that
            is, the lower bound of $4$ can be met for this point, only
            by ``significantly'' enlarging one of the adjacent disks
            to also cover it).%
            \\[1cm]
        \end{minipage}
    \end{tabular}
    \vspace{-1.6cm}

    \begin{tabular}{cc}
        \begin{minipage}{0.6\linewidth}
            ~\\[1cm]
            If the signal arrives on only one of the wires, then one
            can cover the clause with disks of radius $1$ (and meet
            the lower bound requirement). The other cases of a signal
            arriving on a single wire follow by symmetry.
        \end{minipage}
        &
        \begin{minipage}{0.3\linewidth}
            \vspace{.5cm}
            \includegraphics[page=2,scale=.95]{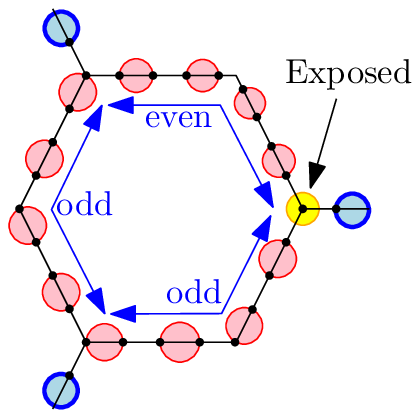}
        \end{minipage}
    \end{tabular}

    \begin{tabular}{ccc}
        \\
        \includegraphics[page=3,scale=.95]{figs/clause_assignment}
        & 
        \includegraphics[page=4,scale=.95]{figs/clause_assignment}
        &
        \includegraphics[page=5,scale=.95]{figs/clause_assignment}
        \\[-0.5cm]
        (A) & (B) & (C)
    \end{tabular}
    \begin{center}
        \begin{minipage}{0.99\linewidth}
            If the signal arrives on two of the three wires, then
            again, one of the points must be exposed. It is easy to
            verify that in any of these cases, there is a portion of
            the clause between two gates that needs covering, but it
            contains an odd number of points -- thus the points can
            not be covered with disks of radius one.
        \end{minipage}
    \end{center}
    \begin{tabular}{cc}
        \begin{minipage}{0.3\linewidth}
            \includegraphics[page=6,scale=.95]{figs/clause_assignment}
        \end{minipage}
        &
        \begin{minipage}{0.6\linewidth}
            If the signal arrives on all three wires then the points
            of the clause can not be covered, by the same
            argumentation of case (B) above (for the reader's
            enjoyment, we show a different covering pattern in the
            figure).
        \end{minipage}
    \end{tabular}
     \caption{The clause gadget can be satisfied \si{iff} the signal
        arrives on only one wire. }
    \figlab{case:analysis}
\end{figure}


  \immediate\closeout\myoutfile 
  \section{Proofs}
  \input{fragment/myfile.tmp}


\end{document}